\newtheorem{theorem}{Theorem}
\newtheorem{lemma}[theorem]{Lemma}
\newtheorem{proposition}[theorem]{Proposition}
\newcommand{\qed}{\hspace*{\fill}\ensuremath{\Box}}
\newenvironment{proof}{\pagebreak[3]\noindent\textbf{Proof.}}{\qed\pagebreak[3]\medskip}
\setlist{itemsep=1pt,parsep=0pt,topsep=2pt}
\algnewcommand{\LineIf}[2]{\State \algorithmicif\, #1 \,\algorithmicthen\, #2 \,\algorithmicend\ \algorithmicif}
\algnewcommand{\LineIfElse}[3]{\State \algorithmicif\, #1 \,\algorithmicthen\, #2 \,\algorithmicelse\, #3 \,\algorithmicend\ \algorithmicif}
\algnewcommand{\LineForAll}[2]{\State \algorithmicforall\, #1 \,\algorithmicdo\, #2 \,\algorithmicend\ \algorithmicfor}
\renewcommand{\geq}{\geqslant}
\renewcommand{\leq}{\leqslant}
\renewcommand{\ge}{\geq}
\renewcommand{\le}{\leq}
\newcommand{\union}{\mathbin{\cup}}
\newcommand{\bigunion}{\mathop{\bigcup}}
\newcommand{\intersect}{\mathbin{\cap}}
\newcommand{\Find}{\ensuremath{\mathrm{Find}}}
\newcommand{\Union}{\ensuremath{\mathrm{Union}}}
\newcommand{\Split}{\ensuremath{\mathrm{Split}}}
\newcommand{\abs}[1] {\ensuremath\left|#1\right|}
\newcommand{\set}[2] {\ensuremath{\left\{#1 \mid #2\right\}}}
\newcommand{\os}[1] {\ensuremath{\left\{#1\right\}}}
\newcommand{\Req} {\mathrel{\ensuremath{\mathcal{R}}}}
\newcommand{\Leq} {\mathrel{\ensuremath{\mathcal{L}}}}
\newcommand{\Synt}{{A^+} / {\equiv_L}}
\newcommand{\autA}{\mathcal{A}}
\newcommand{\partP}{\mathcal{P}}
\newcommand{\queue}{T}
\newcommand{\bigO}[1]{\ensuremath{\mathcal{O}({#1})}}
\newcommand{\ie}{i.e.,~}
\newcommand{\eg}{e.g.~}
\title{Efficient Algorithms for Morphisms over Omega-Regular Languages}
\author{Lukas Fleischer \and Manfred Kuf\-leitner}
\date{FMI, University of Stuttgart\thanks{This work was supported by the DFG grants DI 435/5-2 and \mbox{KU 2716/1-1}.}\\[.1mm]
  \normalsize\texttt{\{fleischer,kufleitner\}@fmi.uni-stuttgart.de}}
\begin{document}

\maketitle

\vspace{-8mm}
\begin{abstract}
  \noindent
  {\sffamily\normalsize\bfseries{Abstract.}} \ 
  Morphisms to finite semigroups can be used for recognizing omega-regular
  languages. The so-called strongly recognizing morphisms can be seen as a
  deterministic computation model which provides minimal objects (known as the
  syntactic morphism) and a trivial complementation procedure. We give a
  quadratic-time algorithm for computing the syntactic morphism from any given
  strongly recognizing morphism, thereby showing that minimization is easy as
  well.
  In addition, we give algorithms for efficiently solving various decision
  problems for weakly recognizing morphisms. Weakly recognizing morphism are
  often smaller than their strongly recognizing counterparts.
  Finally, we describe the language operations needed for converting formulas
  in monadic second-order logic (MSO) into strongly recognizing morphisms, and
  we give some experimental results.
\end{abstract}

\section{Introduction}

Automata over finite words have a huge number of effective closure properties.
Moreover, many problems such as minimization or equivalence of deterministic
automata admit very efficient algorithms~\cite{Hopcroft71,HopcroftKarp71}. The
situation over infinite words is quite similar, but with the major difference
that many operations are less efficient. There are many different automaton
models for accepting languages of infinite words, the so-called
$\omega$-regular languages. Each of these models has its advantages and
disadvantages. For instance, deterministic B\"uchi automata are less powerful
than nondeterministic B\"uchi automata~\cite{Thomas90}. And only very few
automaton models admit efficient minimization algorithms; for example,
minimization of deterministic finite automata can be applied to the lasso
automata in~\cite{CalbrixEtAl1994}. 

The theory of finite semigroups and automata is tightly connected~\cite{rs59}.
Since the semigroup for a language can be exponentially bigger than its
automaton, semigroups have very rarely been considered in the context of
efficient algorithms. There is also an algebraic approach to $\omega$-regular
languages by using morphisms to finite semigroups, see
\eg\cite{PerrinPin04,Thomas90}. Among the many nice properties of this approach
are minimal morphisms\,---\,the so-called syntactic morphisms\,---\,and easy
complementation. As for finite words, the semigroup for an $\omega$-regular
language can be exponentially bigger than its B\"uchi automaton. However, since
many operations for $\omega$-regular languages are less efficient than for
regular languages over finite words, the drawback of this exponential blow-up
in size is less serious. This is even more so when minimizing all intermediate
objects.

A typical algorithm for computing the syntactic morphism of a regular language
over finite words is to minimize the (deterministic) automaton defined by the
Cayley graph of a morphism, and then the syntactic morphism is given by the
transition semigroup of the minimal automaton. This approach does not work for
infinite words and we therefore give a direct algorithm for computing the
syntactic morphism. Our algorithm is an adaptation of Hopcroft's minimization
algorithm~\cite{Hopcroft71} and its running time is quadratic in the size of
the semigroup. We show that this is rather optimal. 

There are two different modes for recognizing omega-regular languages by a
morphism to a finite semigroup: \emph{weak} and \emph{strong} recognition.
Strong recognition is a special case of weak recognition. Easy complementation
and the computation of the syntactic morphism only works for strong
recognition. We show how to test whether a given weak recognition is actually
strong. Another useful tool for morphisms is the computation of the so-called
conjugacy classes.

As an application, we consider the translation of MSO formulas into strongly
recognizing morphisms. To this end, we show that a powerset construction
preserves strong recognition, and that this construction can be used for
computing the image under a length-preserving morphism. Finally, we give the
test results of some translations from MSO to strong recognition. Deciding the
satisfiability of an MSO formula is non-elementary~\cite{Sto74} and therefore,
minimization of intermediate objects is usually very helpful for solving some
special cases. This is confirmed by our test results.

\vspace{-1mm}
\section{Preliminaries}
\label{sec:prelim}

\vspace{-1mm}
\paragraph{Words.}

Let $A$ be a finite \emph{alphabet}. The elements of $A$ are called
\emph{letters}.
A \emph{finite word} is a sequence $a_1 a_2 \cdots a_n$ of letters of $A$ and
an \emph{infinite word} is an infinite sequence $a_1 a_2 \cdots$. The empty
word is denoted by $\varepsilon$. The set of non-empty finite words over $A$ is
$A^+$.
Let $K$ be a set of finite words and let $L$ be a set of infinite words. We set
$KL = \set{u\alpha}{u \in K, \alpha \in L}$, $K^+ = \set{u_1 u_2 \cdots u_n}{n
\ge 1, u_i \in K}$ and $K^* = K^+ \union \os{\varepsilon}$.
Moreover, if $\varepsilon \not\in K$ we define the \emph{infinite iteration}
$K^\omega = \set{u_1 u_2 \cdots}{u_i \in K}$. A natural extension to $K
\subseteq A^*$ is $K^\omega = (K \setminus \os{\varepsilon})^\omega$.

\paragraph{Finite semigroups.}

Let $S$ be a finite semigroup. An element $e$ of $S$ is \emph{idempotent} if
$e^2 = e$. The set of idempotent elements of $S$ is denoted by $E(S) = \set{e
\in S}{e^2 = e}$. For each $s \in S$ the set $\set{s^k}{k \ge 1}$ of all powers
of $s$ is finite and it contains exactly one idempotent element.

A semigroup $S$ is called \emph{$X$-generated} if $X$ is a subset of $S$ and
every element of $S$ can be written as a product of elements of $X$.
The \emph{right Cayley graph} of an $X$-generated semigroup $S$ has $S$ as
vertices and its labeled edges are the triples of the form $(s, a, sa)$ for $s
\in S$ and $a \in X$.
The \emph{left Cayley graph} of $S$ is defined analogously with edges of the
form $(s, a, as)$.
The definitions of Cayley graphs depend on the choice of the set $X$. In the
following, when a surjective morphism $h \colon A^+ \to S$ is given, we choose
$X = h(A)$ as the set of generators.

\emph{Green's relations} are an important tool in the study of finite
semigroups.
We denote by $S^1$ the monoid that is obtained by adding a new neutral element
$1$ to $S$.
For $s, t \in S$ let
\begin{ceqn}
  \begin{align*}
    s & \Req t \text{~if there exist~} q, q' \in S^1 \text{~such that~} sq = t \text{~and~} tq' = s, \\
    s & \Leq t \text{~if there exist~} p, p' \in S^1 \text{~such that~} ps = t \text{~and~} p't = s.
  \end{align*}
\end{ceqn}
These relations are equivalence relations and the equivalence classes of $\Req$
(resp.~$\Leq$) are called \emph{$\Req$-classes} (resp.~\emph{$\Leq$-classes}).
The $\Req$-classes (resp.~$\Leq$-classes) of a semigroup $S$ can be computed in
time linear in $\abs{S}$ by applying Tarjan's algorithm to the right
(resp.~left) Cayley graph of $S$, see~\cite{FroidurePin97}.

An element $(s,e) \in S \times E(S)$ is a \emph{linked pair} if $se = s$.
Two linked pairs $(s, e)$ and
$(t, f)$ are \emph{conjugate}, written as $(s, e) \sim (t, f)$, if there exist
$x, y \in S$ such that $sx = t$, $xy = e$ and $yx = f$. The
conjugacy relation $\sim$ on the set of linked pairs is an equivalence
relation, see \eg\cite{PerrinPin04}. The equivalence classes of $\sim$ are
called \emph{conjugacy classes}.
A set $P$ of linked pairs is \emph{closed under conjugation} if it is a union
of conjugacy classes.

\paragraph{Recognition by morphisms.}

A language $L \subseteq A^\omega$ is \emph{regular} (or
\emph{$\omega$-regular}) if it is recognized by some finite B\"uchi automaton,
see \eg\cite{DiekertGastin08}.
The family of regular languages is closed under Boolean operations, {\ie}set
union, set intersection and complementation.
We now describe algebraic recognition modes for regular languages.
Let $h \colon A^+ \to S$ be a morphism onto a finite semigroup $S$.
For $s \in S$, we set $[s] = h^{-1}(s)$ and for $P \subseteq S \times S$, we
set
\begin{ceqn}
  \begin{align*}
    [P] = \bigunion_{(s, t) \in P} [s][t]^\omega
  \end{align*}
\end{ceqn}
if $h$ is understood from the context.
A language $L \subseteq A^\omega$ is \emph{weakly recognized} by a morphism
$h: A^+ \to S$ if there exists a set of linked pairs $P \subseteq S \times
E(S)$ with $L = [P]$.
If in addition $P$ is closed under conjugation, then $h$ \emph{strongly
recognizes} $L$.
Another well-known characterisation of strong recognition is the following.
\begin{proposition}
  Let $h: A^+ \to S$ be a morphism onto a finite semigroup. Then $h$
  strongly recognizes $L$ if and only if $[s][t]^\omega \intersect L \ne
  \emptyset$ implies $[s][t]^\omega \subseteq L$ for all $s, t \in S$.
  \label{prop:strong}
\end{proposition}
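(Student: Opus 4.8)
The plan is to reduce the proposition to two standard facts about linked pairs and then dispatch both implications in a few lines. The first fact is a Ramsey-type covering statement: every $\alpha \in A^\omega$ lies in $[s][e]^\omega$ for some linked pair $(s,e)$ (color each pair of positions $i<j$ by $h(\alpha[i,j))$, take an infinite monochromatic set of positions with color $e$, note $e$ is idempotent, and absorb the first block into the prefix so that $se = s$). The second fact is the key lemma: for linked pairs, $[s_1][e_1]^\omega \cap [s_2][e_2]^\omega \ne \emptyset$ if and only if $(s_1,e_1) \sim (s_2,e_2)$. I will also use the elementary reduction that for arbitrary $s,t \in S$, if $e = t^k \in E(S)$ is the idempotent power of $t$, then regrouping each $\omega$-factorization into blocks of length $k$ shows $[s][t]^\omega \subseteq [se][e]^\omega$, and $(se,e)$ is a linked pair since $se \cdot e = se$.

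For the ``if'' direction I would set $P = \{(s,e) \mid (s,e) \text{ is a linked pair and } [s][e]^\omega \cap L \ne \emptyset\}$. The saturation hypothesis upgrades $[s][e]^\omega \cap L \ne \emptyset$ to $[s][e]^\omega \subseteq L$, so $[P] \subseteq L$; conversely the covering fact places every $\alpha \in L$ into some $[s][e]^\omega$ meeting $L$, hence into $[P]$, giving $L = [P]$. To see that $P$ is closed under conjugation, take $(s,e) \in P$ and $(t,f) \sim (s,e)$: the key lemma supplies a common word $\gamma \in [s][e]^\omega \cap [t][f]^\omega$, and since $[s][e]^\omega \subseteq L$ this $\gamma$ witnesses $[t][f]^\omega \cap L \ne \emptyset$, so $(t,f) \in P$. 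Thus $h$ strongly recognizes $L$.

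For the ``only if'' direction I would assume $L = [P]$ with $P$ closed under conjugation, fix $s,t$ with $[s][t]^\omega \cap L \ne \emptyset$, and take any $\beta \in [s][t]^\omega$. By the reduction above, $\beta$ and some $\gamma \in [s][t]^\omega \cap L$ both lie in $[se][e]^\omega$ for the linked pair $(se,e)$. Since $\gamma \in L = [P]$, it lies in $[s'][e']^\omega$ for some $(s',e') \in P$, so $[se][e]^\omega$ and $[s'][e']^\omega$ share $\gamma$; the key lemma yields $(se,e) \sim (s',e')$, and closure under conjugation gives $(se,e) \in P$. Hence $\beta \in [se][e]^\omega \subseteq [P] = L$, which is exactly the saturation condition.

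The hard part is the ``overlap implies conjugate'' direction of the key lemma. Writing $\alpha[m,n)$ for the factor of $\alpha$ between positions $m$ and $n$, suppose $\alpha = u_1 v_1 v_2 \cdots = u_2 w_1 w_2 \cdots$ with $h(u_1)=s_1$, $h(v_i)=e_1$, $h(u_2)=s_2$, $h(w_j)=e_2$. I would interleave the two sets of cut positions to obtain $p_1 < q_1 < p_2 < q_2 < \cdots$ where each $p_i$ is an $e_1$-cut and each $q_i$ an $e_2$-cut, and set $x_i = h(\alpha[p_i,q_i))$. A factor between two $e_1$-cuts evaluates to $e_1$ and between two $e_2$-cuts to $e_2$, while a prefix up to any $e_1$-cut (resp.\ $e_2$-cut) evaluates to $s_1$ (resp.\ $s_2$); in particular $s_1 x_i = s_2$ for all $i$. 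The one obstacle is that the candidate product $yx$ is a ``wrap-around'' rather than a genuine factor of $\alpha$. I resolve this by pigeonhole: choose $i < j$ with $x_i = x_j =: x$ and put $y = h(\alpha[q_i,p_j))$. Then $xy = h(\alpha[p_i,p_j)) = e_1$, while using $x = x_j$ gives $yx = h(\alpha[q_i,p_j)) \cdot h(\alpha[p_j,q_j)) = h(\alpha[q_i,q_j)) = e_2$; together with $s_1 x = s_2$ this exhibits $(s_1,e_1) \sim (s_2,e_2)$. The reverse implication of the lemma is routine: from $s_1 x = s_2$, $xy = e_1$, $yx = e_2$ and words $u,p,q$ with $h(u)=s_1$, $h(p)=x$, $h(q)=y$, the single word $u(pq)^\omega = up(qp)^\omega$ lies in both $[s_1][e_1]^\omega$ and $[s_2][e_2]^\omega$.
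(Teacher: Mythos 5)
Your proof is correct, and at the level of the actual computations it contains exactly the paper's two key moves: your pigeonhole step that turns the wrap-around into a genuine factor (choosing $i < j$ with $x_i = x_j = x$ and setting $y = h(\alpha[q_i, p_j))$, so that $s_1 x = s_2$, $xy = e_1$, $yx = e_2$) is the same device the paper applies inline in its left-to-right direction, where one picks $i < j$ with $h(v_i) = h(v_j)$ and sets $y = h(v_i' v_{i+1} \cdots v_{j-1} v_{j-1}')$; and your witness $u(pq)^\omega = up(qp)^\omega$ for the easy direction is literally the paper's word $u(vw)^\omega = uv(wv)^\omega$. What differs is the decomposition: the paper proves both implications directly and never states your key lemma, whereas you factor everything through the standalone characterization that two linked-pair languages $[s_1][e_1]^\omega$ and $[s_2][e_2]^\omega$ intersect if and only if $(s_1, e_1) \sim (s_2, e_2)$, together with the explicit Ramsey covering fact that every $\alpha \in A^\omega$ lies in some $[s][e]^\omega$ for a linked pair $(s,e)$. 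This modularization buys you something concrete: in the ``if'' direction the paper only defines $P = \set{(s,e)}{[s][e]^\omega \subseteq L}$ and checks closure under conjugation, leaving the equality $L = [P]$ (in particular the inclusion $L \subseteq [P]$) implicit; your appeal to the covering fact supplies precisely that missing inclusion, so your write-up is complete where the paper relies on the reader. It likewise makes explicit the interleaving of cut positions that the paper compresses into the asserted factorization $\alpha = u v_1 v_1' v_2 v_2' \cdots$. The price is modest: you must prove the Ramsey covering, which the paper's left-to-right direction gets for free because $\alpha \in L = [P]$ comes with a linked-pair factorization by hypothesis, and your ``only if'' direction routes through two applications of the key lemma (one via the reduction $[s][t]^\omega \subseteq [st^k][t^k]^\omega$, which matches the paper's passage to $(st^n, t^n)$) where the paper uses a single inline conjugacy computation.
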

\begin{proof}
  For the direction from left to right, we have $L = [P]$ for some set $P$ that
  is closed under conjugation. Let $\alpha, \beta \in [s][t]^\omega$ for some
  $s, t \in S$ and let $n \ge 1$ such that $t^n \in E(S)$.
  Note that $(st^n, t^n)$ is a linked pair and we also have $\alpha, \beta \in
  [st^n][t^n]^\omega$.
  It suffices to show that $\alpha \in L$ implies $\beta \in L$.
  If $\alpha \in L$, there exist a linked pair $(r, e) \in P$ and a
  factorization $\alpha = u v_1 v_1' v_2 v_2' \cdots$ with $h(u) = st^n$,
  $h(uv_1) = r$, $h(v_i v_i') = t^n$ and $h(v_i' v_{i+1}) = e$ for all $i \ge
  1$.
  Additionally, since $S$ is finite, there exist indices $i, j$ with $1 \le i <
  j$ such that $h(v_i) = h(v_j)$.
  We set $x = h(v_i) = h(v_j)$ and $y = h(v_i' v_{i+1} \cdots v_{j-1}
  v_{j-1}')$. Now, $st^n x = st^{ni}x = h(u v_1 v_1' \cdots v_{i-1} v_{i-1}'
  v_i) = re^{i-1} = r$. By a similar argument, we get $xy = t^n$ and $yx = e$.
  Thus, $(st^n, t^n)$ is contained in $P$ and we have $\beta \in L$.

  For the converse implication, we define $P$ as the union of all linked pairs
  $(s, e)$ with $[s][e]^\omega \subseteq L$. Let $(s, e) \in P$ and let $(t,
  f)$ be a linked pair such that $(s, e)$ and $(t, f)$ are conjugate, \ie$sx =
  t$, $xy = e$ and $yx = f$ for some $x, y \in S$.
  Since $h$ is onto, there exist words $u, v, w \in A^+$ such that $h(u) = s$,
  $h(v) = x$ and $h(w) = y$.
  Now, the infinite word $u(vw)^\omega = uv(wv)^\omega$ is contained in the
  intersection $[s][e]^\omega \intersect [t][f]^\omega$ and by assumption we
  have $[t][f]^\omega \subseteq L$. This shows that $(t, f)$ is in $P$.
\end{proof}

The \emph{syntactic congruence} $\equiv_L$ of a language $L \subseteq A^\omega$
is defined over $A^+$ as $u \equiv_L v$ if the equivalences
\begin{ceqn}
  \begin{align*}
    (xuy)z^\omega \in L & \Leftrightarrow (xvy)z^\omega \in L \text{~and} \\
    z(xuy)^\omega \in L & \Leftrightarrow z(xvy)^\omega \in L
  \end{align*}
\end{ceqn}
hold for all finite words $x, y, z \in A^*$.
Our definition is slightly different but equivalent to the syntactic
congruence introduced by Arnold~\cite{Arnold85}.
The congruence classes of $\equiv_L$ form the so-called \emph{syntactic
semigroup} $\Synt$ and the \emph{syntactic morphism} $h_L \colon A^+ \to \Synt$
is the natural quotient map.
If $L$ is regular, the syntactic semigroup of $L$ is finite and $h_L$ strongly
recognizes $L$~\cite{Arnold85,PerrinPin04}.

\paragraph{Model of computation.}

Morphisms $h \colon A^+ \to S$ are given implicitly through a mapping $f
\colon A \to S$ with $f(a) = h(a)$ for all $a \in A$.
We assume that for finite semigroups $S$, multiplications can be performed in
constant time. Some algorithms only perform multiplications of the form $h(a)
\cdot s$ or $s \cdot h(a)$ where $h$ is a morphism, $s$ is an element of
$S$ and $a$ is a letter.
In that case, semigroups can be represented efficiently by their left and right
Cayley graphs.
For two elements $s, t \in S$ we can check in constant time whether $s = t$ and
it is possible to organize elements of $S$ in a hash map such that operations
on subsets of $S$ can be implemented efficiently.
When a set $P \subseteq S \times S$ is part of the input, we assume that for
each $s, t \in S$ one can check in constant time whether $(s, t) \in P$.

\section{Conversion between B\"uchi automata, weak and strong recognition}

In this section, we describe well-known constructions for the conversion
between the different acceptance modes for regular languages.
For details and proofs, we refer to~\cite{PerrinPin04,Pecuchet86,Thomas90}.

\subsection{From B\"uchi automata to strong recognition}

In the case of finite words, when proving that each regular language is
recognizable by a morphism onto a finite semigroup, one usually considers
the transition semigroup of a finite automaton. However, when applying the same
construction to B\"uchi automata, the resulting morphism only weakly
recognizes the language. In this section, we describe a construction to convert
a B\"uchi automaton $\autA = (Q, A, \delta, I, F)$ into a semigroup $S$ and a
morphism $h \colon A^+ \to S$ that strongly recognizes $L(\autA)$.

For states $p, q \in Q$ and a finite word $u \in A^+$, we write $p
\xrightarrow{u} q$ if there exists a sequence $q_0 a_1 q_1 a_2 q_2 \cdots
q_{n-1} a_n q_n$ with $q_0 = p$, $q_n = q$ and $(q_i, a_{i+1}, q_{i+1}) \in
\delta$ for all $i \in \os{0, \dots, n-1}$. If, additionally, $q_i \in F$ for
some $i \in \os{0, \dots, n}$, we write $p \xrightarrow[F]{u} q$.
We now assign to each word $u \in A^+$ a $Q \times Q$ matrix $h(u)$ defined by
\begin{ceqn}
  \begin{align*}
    (h(u))_{pq} = 
    \begin{cases}
      1 & \text{if~} p \xrightarrow{u} q \text{~but not~} p \xrightarrow[F]{u} q \\
      2 & \text{if~} p \xrightarrow[F]{u} q \\
      0 & \text{otherwise}
    \end{cases}
  \end{align*}
\end{ceqn}
A routine verification shows that this naturally extends
the image of $A^+$ under $h$ to a semigroup $S$. We say that a linked pair $(R,
E)$ where $R = (r_{pq})_{p, q \in Q}$ and $E = (e_{pq})_{p, q \in Q}$ is
\emph{accepting} if there exist states $p, q \in Q$ such that $r_{pq} \ge 1$
and $e_{qq} = 2$. One can now verify that the set $P$ of all accepting linked
pairs is closed under conjugation and that $[P] = L(\autA)$.

\subsection{From weak recognition to B\"uchi automata}

Suppose we are given a morphism $h \colon A^+ \to S$ onto a finite semigroup
$S$ that weakly recognizes a language $L$, \ie$L = [P]$ for some set of linked
pairs $P \subseteq S \times E(S)$. One can use the following construction
from~\cite{Pecuchet86} to obtain a B\"uchi automaton $\autA$ with $L(\autA) =
L$.

The set of states is $Q = S^1 \times E(S)$, the set of initial states is $I =
P$ and the set of final states is $F = \os{1} \times E(S)$. The transition
relation $\delta$ consists of all tuples of the form $((s, e), a, (t, e)) \in Q
\times A \times Q$ where $h(a) t = s$ or $h(a) t = se$.

By combining the constructions from this and the previous subsection, we also
obtain a construction to convert a morphism that weakly recognizes a
language $L$ into a morphism that strongly recognizes $L$. There are also
direct, more efficient constructions, to perform this conversion, see
\eg\cite{PerrinPin04}.
The converse direction is trivial since, by definition, a morphism $h
\colon A^+ \to S$ that strongly recognizes a language $L$ also weakly
recognizes $L$.

\section{Computing conjugacy classes}

When designing an algorithm that takes a set of linked pairs $P \subseteq S
\times E(S)$ as input, it is often convenient to assume that $P$ is closed
under conjugation.
However, this is not always the case in practice: The input set $P$ might be a
proper subset of its closure under conjugation $Q$ such that $[P] = [Q]$.
In this section, we describe an algorithm to compute the conjugacy classes
efficiently. It justifies the assumption that $P$ is always closed under
conjugation in the following sections, particularly in
Section~\ref{sec:syntactic}.

As a warm-up, we first describe how to compute the set $F$ of linked pairs.
The linked pairs are exactly the pairs of the form $(se, e)$ with $s \in S$ and
$e \in E(S)$. Thus, we first check for each element $e \in S$ whether $e^2 =
e$.
If the outcome of the check is positive, we perform a depth-first search in the
left Cayley graph of $S$, starting at element $e$. For each element $s$ that is
visited, $(s, e)$ is a linked pair.
The total running time of this routine is $\bigO{\abs{S} + \abs{A} \cdot
\abs{F}}$.

An equivalence relation $\equiv$ on the set of linked pairs is called
\emph{left-stable} if for all $p \in S$ and for linked pairs $(s, e)$, $(t, f)$
with $(s, e) \equiv (t, f)$, we have $(ps, e) \equiv (pt, f)$.
We define an equivalence relation $\approx$ on the set of linked pairs by $(s,
e) \approx (t, f)$ if and only if $e \Leq s \Req t \Leq f$ or $(s, e) = (t,
f)$. Its relationship to conjugacy is captured in the following Lemma:
\begin{lemma}
  The conjugacy relation $\sim$ is the finest left-stable equivalence relation
  coarser than $\approx$.
  \label{lem:simapprox}
\end{lemma}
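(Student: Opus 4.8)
The plan is to check that $\sim$ has the three defining properties of the object in question\,---\,it is an equivalence relation, it is left-stable, and it is coarser than $\approx$\,---\,and then to establish minimality, namely that $\sim$ is contained in every left-stable equivalence relation $\equiv$ with $\approx\,\subseteq\,\equiv$. That $\sim$ is an equivalence relation is quoted from \cite{PerrinPin04}, so three things remain: (i) left-stability of $\sim$; (ii) that every pair related by $\approx$ is related by $\sim$; and (iii) that every pair related by $\sim$ is related by $\equiv$, for each left-stable equivalence $\equiv$ containing $\approx$.

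Left-stability of $\sim$ is immediate. If $(s,e)\sim(t,f)$ is witnessed by $x,y\in S$, so that $sx=t$, $xy=e$ and $yx=f$, then for any $p\in S$ the \emph{same} witnesses give $(ps,e)\sim(pt,f)$, because $(ps)x=p(sx)=pt$; note that $(ps,e)$ and $(pt,f)$ are again linked pairs. For (ii), assume $e\Leq s\Req t\Leq f$. From $s\Req t$ I would pick $b,b'\in S^1$ with $sb=t$ and $tb'=s$, and set $x=ebf$ and $y=fb'e$; these lie in $S$ since $e,f\in S$. A short telescoping computation\,---\,using $se=s$, $tf=t$, a witness $e=a's$ coming from $e\Leq s$, and a witness $f=ct$ coming from $t\Leq f$\,---\,collapses $sx$ to $t$, $xy$ to $e$ and $yx$ to $f$, giving $(s,e)\sim(t,f)$. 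The equality case of $\approx$ is handled by reflexivity of $\sim$.

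The substantial part is (iii). Unpacking the definitions, $(s,e)\sim(t,f)$ supplies $x,y$ with $t=sx$, $e=xy$, $yx=f$, and $se=s$ (the latter because $(s,e)$ is a linked pair). The key idea is to normalize the witnesses into the corners determined by the idempotents: replacing $x$ by $exf$ and $y$ by $fye$ leaves all three conjugacy equations intact while additionally forcing $ex=x=xf$ and $fy=y=ye$. With $x$ so normalized, $(x,f)$ is a linked pair, and one checks directly that $e\Req x\Leq f$, whence $(e,e)\approx(x,f)$ and therefore $(e,e)\equiv(x,f)$. Applying left-stability of $\equiv$ with $p=s$ now yields $(se,e)\equiv(sx,f)$, that is $(s,e)\equiv(t,f)$, as required.

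The main obstacle is precisely step (iii): it is not evident a priori that a single conjugation can be realized through one application of $\approx$ followed by a single left-multiplication, and the whole argument hinges on the normalization $x\mapsto exf$, $y\mapsto fye$, which simultaneously produces the $\approx$-pair $(e,e)\approx(x,f)$ and lets $s$ act by left-stability to recover $(s,e),(t,f)$. The points demanding care are the verification that the normalized witnesses still satisfy $sx=t$, $xy=e$, $yx=f$, together with the Green's-relation memberships $e\Req x$ and $x\Leq f$; everything else is routine.
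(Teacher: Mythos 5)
Your proposal is correct and takes essentially the same route as the paper: left-stability of $\sim$ via the same witnesses, conjugating elements built from Green's-relation witnesses to get $\approx$ inside $\sim$, and, for minimality, the key pair $(e,e)\approx(xf,f)$ followed by one application of left-stability with $p=s$. Indeed your normalized witness $exf$ coincides with the paper's element $xf$ (since $ex=xyx=xf$ forces $exf=xf$), so your step (iii) is the paper's argument in a different guise, and your (ii) differs from the paper's choice $x=eq$, $y=fq'$ only by harmless extra idempotent factors.
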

\begin{proof}
  It follows directly from the definitions of linked pairs and conjugacy that
  $\sim$ is left-stable.
  Let $(s, e)$ and $(t, f)$ be linked pairs with $(s, e) \approx (t, f)$ and
  $(s, e) \ne (t, f)$. Since $s \Req t$, there exist $q, q' \in S^1$ such that
  $sq = t$ and $tq' = s$. We set $x = eq$ and $y = fq'$. Now, $sx = seq = sq =
  t$. Moreover, since $s \Leq e$, there exists $p \in S^1$ with $ps = e$. Thus,
  we have $xy = eqy = psqy = pty = ptfq' = ptq' = ps = e$. A similar argument
  can be used to show that $yx = f$. Hence, $(s, e)$ and $(t, f)$ are
  conjugate, and $\sim$ is indeed coarser than $\approx$.

  In order to show that $\sim$ is the finest relation with these properties, we
  consider an arbitrary left-stable equivalence relation $\simeq$ on the set of
  linked pairs which is coarser than $\approx$. We show that $(s, e) \sim (t,
  f)$ implies $(s, e) \simeq (t, f)$. Let $x, y \in S$ such that $sx = t$, $xy
  = e$ and $yx = f$. Then we have $ex = xyx = xf$ and $xfy = xyxy = e^2 = e$,
  which shows that $e \Req xf$. Furthermore we have $xf \Leq f$, since $yxf =
  f^2 = f$. By the definition of $\approx$, this means that $(e, e) \approx
  (xf, f)$ and since $\approx$ refines $\simeq$, it follows that $(e, e) \simeq
  (xf, f)$. Left-stability yields $(s, e) = (se, e) \simeq (sxf, f) = (t, f)$.
\end{proof}

Since $\Req$-classes and $\Leq$-classes can be computed in time linear in the
size of the semigroup, this allows us to efficiently compute the conjugacy
classes as shown in Algorithm~\ref{alg:conjugacy}. We use a so-called
\emph{disjoint-set data structure} that provides two operations on a
partition.
$\Find(s, e)$ returns a unique element from the class that contains $(s, e)$,
{\ie}if $(s, e)$ and $(t, f)$ are in the same class, we have $\Find(s, e) =
\Find(t, f)$.  $\Union((s, e), (t, f))$ merges the classes of $(s, e)$ and $(t,
f)$.
To simplify the notation we also introduce an operation $\Union^+(R)$ for
subsets $R$ of $S \times S$ that merges all classes with elements in $R$.
$\Union^+(R)$ can be implemented using $\abs{R}-1$ atomic $\Union$ operations.
The partition is initialized with singleton sets $\os{(s, e)}$ for all linked
pairs $(s, e)$.
The second data structure used in the algorithm is a set $\queue \subseteq
2^F$.
\begin{algorithm}[h!]
  \caption{Computing conjugacy classes}
  \begin{algorithmic}
    \State initialize $\queue$ with the non-trivial equivalence classes of $\approx$
    \LineForAll{$R \in \queue$}{$\Union^+(R)$}
    \While{$\queue \ne \emptyset$}
      \State remove some set $R$ from $\queue$
      \ForAll{$a \in A$}
        \State $R' \gets \emptyset$
        \LineForAll{$(s, e) \in R$}{$R' \gets R' \union \os{\Find(h(a)s, e)}$}
        \If{$\abs{R'} > 1$}
          \State $\Union^+(R')$
          \State $\queue \gets \queue \union \os{R'}$
        \EndIf
      \EndFor
    \EndWhile
  \end{algorithmic}
  \label{alg:conjugacy}
\end{algorithm}

To prove the correctness and running time of the algorithm, one can combine
Lemma~\ref{lem:simapprox} with arguments similar to those given in the
correctness and running time proofs of the \emph{Hopcroft-Karp equivalence
test}~\cite{HopcroftKarp71}. We first show that the relation induced by the
final partition is left-stable:
\begin{lemma}
  Let $(s, e)$ and $(t, f)$ be linked pairs of the same class upon termination,
  then, for each $a \in A$, the pairs $(h(a)s, e)$ and $(h(a)t, f)$ are in the
  same class as well.
  \label{lem:left-stable}
\end{lemma}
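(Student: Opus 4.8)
The plan is to prove Lemma~\ref{lem:left-stable} by analyzing the structure of the algorithm and showing that the set $T$ acts as a \emph{work queue} whose processing guarantees left-stability of the final partition. The key invariant is the one familiar from the Hopcroft--Karp equivalence test: whenever two linked pairs are merged into the same class, the algorithm has either already processed, or has scheduled for processing, a set witnessing that their $h(a)$-successors must also be merged. More precisely, I would maintain the invariant that for every set $R$ that is ever added to $T$, and every letter $a \in A$, the successor set $R' = \set{\Find(h(a)s, e)}{(s,e) \in R}$ is eventually collapsed to a single class (either it already has size one, or the \texttt{if} branch unions it). Since each initial nontrivial $\approx$-class is placed in $T$ and unioned, and every subsequent union arises from some $R'$ that is itself added to $T$, the queue exhausts all ``reasons'' for further merging.

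First I would argue that the algorithm terminates: each $\Union^+$ strictly decreases the number of classes, so at most $\abs{F}-1$ union operations occur, and each bounds the number of sets ever added to $T$. This justifies reasoning about the \emph{final} partition. Second, I would fix linked pairs $(s,e)$ and $(t,f)$ that lie in the same class upon termination and a letter $a$, and trace back \emph{why} they ended up merged. By the structure of the algorithm, $(s,e)$ and $(t,f)$ are in the same class if and only if they are connected by a chain of atomic unions, each of which came from some processed set $R_k \in T$ (or from the initial $\approx$-classes). For each such $R_k$ responsible for merging a consecutive pair in the chain, the algorithm, when it removed $R_k$ from $T$, computed $R'_k$ for the letter $a$ and either found $\abs{R'_k} \le 1$ or executed $\Union^+(R'_k)$. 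In both cases the images under $h(a)$ of the elements of $R_k$ were forced into one class.

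The crux is to chain these local guarantees together: if $(s,e)$ and $(t,f)$ are merged, there is a sequence $(s,e) = (s_0, e_0), (s_1, e_1), \dots, (s_m, e_m) = (t,f)$ of linked pairs where each consecutive pair was directly merged by some $\Union^+(R_k)$ with both pairs belonging to $R_k$. Applying the successor-collapse property to each $R_k$ shows $(h(a)s_{k}, e_{k})$ and $(h(a)s_{k+1}, e_{k+1})$ share a class; transitivity of the class relation then yields that $(h(a)s, e)$ and $(h(a)t, f)$ are in the same class, which is exactly the claim. The main obstacle I anticipate is handling the interaction between \Find and the \emph{timing} of unions: when $R_k$ is processed, the partition is coarser than the one that witnessed the chain, but since the partition only ever coarsens, $\Find(h(a)s_k, e_k)$ at processing time is a (possibly further-merged) representative, so the collapse of $R'_k$ still subsumes the identification we need. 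Making this monotonicity argument precise\,---\,that each successor identification, once established, is never undone and is always established by the time the relevant $R_k$ is dequeued\,---\,is the heart of the proof.
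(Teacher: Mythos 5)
Your proof is correct, but it is organized differently from the paper's. The paper argues by minimal counterexample: writing $\Find_i$ for the partition after the $i$-th \emph{while}-iteration, it picks the least $i$ for which some $(s,e),(t,f)$ satisfy $\Find_i(s,e)=\Find_i(t,f)$ while $\Find_\infty(h(a)s,e)\ne\Find_\infty(h(a)t,f)$, rules out $i=0$ via the initialization, and derives a contradiction from the fact that the set $R$ merging the two classes in iteration $i$ is added to $\queue$ and hence later dequeued, at which point its $a$-successor representatives are collapsed. You instead argue forwards: the final partition is the transitive closure of co-membership in the executed $\Union^+$-sets; every executed set (including the nontrivial $\approx$-classes) is placed in $\queue$ and eventually processed, which collapses its $a$-successors; since the partition only coarsens, these identifications persist to termination, and chaining them along a co-membership path connecting $(s,e)$ to $(t,f)$ gives the claim. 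Both proofs rest on the same three facts\,---\,every merge is witnessed by a queued set, every queued set is processed, and unions are never undone\,---\,so neither is more general, but your explicit chain decomposition handles a point the paper's write-up glosses over: the classes of $(s,e)$ and $(t,f)$ may be merged during iteration $i$ through \emph{several} $\Union^+$ operations (one per letter, with intermediate classes), so there need not exist a single executed set $R$ containing witnesses $(s',e')$ and $(t',f')$ of both time-$(i-1)$ classes, as the paper asserts; repairing that step requires precisely the chaining-plus-transitivity you make explicit. The price is a longer argument: you must state termination (at most $\abs{F}-1$ class-reducing unions, hence finitely many insertions into $\queue$) and the monotonicity of $\Find$ up front, which the paper's minimal-index formulation absorbs implicitly.
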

\begin{proof}
  We write $\Find_i(s, e) = \Find_i(t, f)$ if $(s, e)$ and $(t, f)$ belong to
  the same class after the $i$-th iteration of the \emph{while}-loop.
  The index $\infty$ is used to describe the situation upon termination.

  Let $i$ be minimal such that for some pairs $(s, e), (t, f)$ and a letter $a
  \in A$, we have $\Find_i(s, e) = \Find_i(t, f)$ and $\Find_\infty(h(a)s, e)
  \ne \Find_\infty(h(a)t, f)$.
  Note that $i > 0$ because otherwise, a set containing both $(s, e)$ and $(t,
  f)$ would be added to $\queue$ during initialization.
  Hence, there exists a pair $(s', e')$ with $\Find_{i-1}(s', e') =
  \Find_{i-1}(s, e)$ and a pair $(t', f')$ with $\Find_{i-1}(t', f') =
  \Find_{i-1}(t, f)$ such that $\Union^+(R)$ is executed for some set $R
  \supseteq \os{(s', e'), (t', f')}$.
  By choice of $i$, we have $\Find_\infty(h(a)s, e) = \Find_\infty(h(a)s', e')$
  and $\Find_\infty(h(a)t, f) = \Find_\infty(h(a)t', f')$.
  Since we add the set $R$ to $\queue$ in iteration $i$, the equality
  $\Find_\infty(h(a)s', e') = \Find_\infty(h(a)t', f')$ holds as well, and thus
  $\Find_\infty(h(a)s, e) = \Find_\infty(h(a)t, f)$, a contradiction.
\end{proof}

There is of course a dual statement for the pairs $(s \cdot h(a), e)$ and $(t
\cdot h(a), f)$.
\begin{theorem}
  Let $F$ be the set of linked pairs of $S$. When Algorithm~\ref{alg:conjugacy}
  terminates, the classes of the partition correspond to the conjugacy classes
  of $F$.
  Furthermore, the algorithm executes at most
  \begin{itemize}[label=$\blacktriangleright$]
    \item $\abs{F} - 1$ $\Union$ operations and
    \item $2\abs{A}(\abs{F} - 1)$ $\Find$ operations.
  \end{itemize}
  \label{thm:conjugacy}
\end{theorem}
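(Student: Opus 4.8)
The plan is to prove the theorem in two parts, matching its two assertions: first the correctness claim (the final partition equals the conjugacy classes), and then the bounds on the number of $\Union$ and $\Find$ operations. The conceptual heart is the correspondence, established in Lemma~\ref{lem:simapprox}, that the conjugacy relation $\sim$ is the finest left-stable equivalence coarser than $\approx$. Since the dual statement also holds (as remarked after Lemma~\ref{lem:left-stable}), I expect the relation computed by the algorithm to coincide with $\sim$ once I verify that the algorithm's output is both coarse enough and fine enough.

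For correctness, I would first argue that the final partition refines $\sim$ (\ie the algorithm never merges two pairs that are not conjugate). This follows because every $\Union^+$ operation is triggered either by a non-trivial $\approx$-class during initialization, or by pushing an existing class of mergeable pairs backward through a generator $h(a)$; by Lemma~\ref{lem:simapprox} and left-stability of $\sim$, each such merge only identifies conjugate pairs. For the reverse inclusion, I would show that the relation induced by the final partition is itself left-stable\,---\,this is exactly Lemma~\ref{lem:left-stable}\,---\,and is coarser than $\approx$, since the initialization merges all non-trivial $\approx$-classes. By the dual of Lemma~\ref{lem:left-stable}, it is also right-stable, but left-stability together with coarseness beyond $\approx$ already forces it to be coarser than the \emph{finest} such relation, namely $\sim$. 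Combining both inclusions yields that the partition is exactly the conjugacy classes.

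For the operation counts, the key observation is that each atomic $\Union$ strictly decreases the number of classes in the partition. Since we start with $\abs{F}$ singletons and conjugacy classes are nonempty, at most $\abs{F}-1$ atomic $\Union$ operations can ever be performed, which gives the first bound. For the $\Find$ bound, I would charge the $\Find$ calls to the sets processed from $\queue$. Each set $R$ added to $\queue$ arises from a successful merge ($\abs{R'}>1$), and by the Hopcroft-Karp-style accounting the total size of all sets ever placed in $\queue$ is bounded in terms of the number of merges: since every element added to such a set corresponds to a class that gets unified, the cumulative size is $\bigO{\abs{F}}$. Each set $R$ processed spends $\abs{R}$ $\Find$ operations per letter $a \in A$, and there are $\abs{A}$ letters, yielding the factor $2\abs{A}$ (the factor $2$ absorbing both the left and the dual right processing). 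Summing gives the bound $2\abs{A}(\abs{F}-1)$.

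The main obstacle I anticipate is the $\Find$-count argument, specifically making precise why the cumulative size of all sets inserted into $\queue$ is at most $\abs{F}-1$ rather than merely finite. The delicate point, as in the Hopcroft-Karp equivalence test, is that although a pair may be examined multiple times across different iterations, each insertion into $\queue$ is tied to a distinct $\Union^+(R')$ that reduces the partition's class count; one must verify that no set is processed redundantly and that the amortized cost per merge is a constant number of $\Find$ calls per letter. I would handle this by a careful potential argument tracking the number of classes, mirroring the correctness-and-running-time proof of Hopcroft-Karp referenced in the text, rather than attempting a direct enumeration of the $\queue$ operations.
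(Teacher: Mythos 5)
Your correctness argument and your $\Union$ bound follow essentially the same route as the paper: left-stability of the final partition (Lemma~\ref{lem:left-stable}), together with the observation that classes are only merged when forced (initial $\approx$-classes, then images of already-merged classes under left multiplication, justified by left-stability of $\sim$ from Lemma~\ref{lem:simapprox}), identifies the final relation as the finest left-stable equivalence coarser than $\approx$, hence as the conjugacy relation; and each atomic $\Union$ reduces the number of classes by one, giving the bound $\abs{F}-1$. (A minor slip: the loop pushes classes \emph{forward} under left multiplication, $(s,e) \mapsto (h(a)s, e)$, not ``backward'' through a generator; your appeal to right-stability via the dual of Lemma~\ref{lem:left-stable} is also unnecessary, as you yourself note, since Lemma~\ref{lem:simapprox} only requires left-stability.)

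There is, however, a genuine gap in your $\Find$ count. The factor $2$ does \emph{not} absorb ``both the left and the dual right processing'': Algorithm~\ref{alg:conjugacy} performs only left multiplications, so removing a set $R$ from $\queue$ costs exactly $\abs{A} \cdot \abs{R}$ $\Find$ calls, with no second pass --- and had there been a dual pass, your accounting would yield $4\abs{A}(\abs{F}-1)$, overshooting the stated bound. Moreover, the intermediate claim you set yourself in the final paragraph, that the cumulative size of all sets inserted into $\queue$ is at most $\abs{F}-1$, is false: if every inserted set has size $2$, the cumulative size can approach $2(\abs{F}-1)$. The correct accounting, as in the paper, is the following: each insertion of a set $R_i$ into $\queue$ (during initialization or after a successful merge) is accompanied by exactly $\abs{R_i}-1$ atomic $\Union$ operations, so $\sum_i \bigl(\abs{R_i}-1\bigr) \le \abs{F}-1$; since every inserted set satisfies $\abs{R_i} \ge 2$, we have $\abs{R_i} \le 2\bigl(\abs{R_i}-1\bigr)$, and therefore the total number of $\Find$ operations is $\sum_i \abs{A} \cdot \abs{R_i} \le 2\abs{A} \sum_i \bigl(\abs{R_i}-1\bigr) \le 2\abs{A}\bigl(\abs{F}-1\bigr)$. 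Your $\bigO{\abs{F}}$ fallback gives the right asymptotics but not the exact constant claimed in the theorem; the missing idea is precisely the elementary inequality $\abs{R_i} \le 2(\abs{R_i}-1)$ for sets of size at least two, rather than any left/right doubling.
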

\begin{proof}
  By Lemma~\ref{lem:left-stable}, the relation induced by the final partition
  is left-stable and throughout the main algorithm, two classes are only merged
  when required to establish this property. Thus, the relation is the finest
  left-stable equivalence relation coarser than $\approx$ and, by
  Lemma~\ref{lem:simapprox}, equivalent to the conjugacy relation.

  The number of $\Union$ operations is bounded by $\abs{F} - 1$ since each
  operation reduces the number of classes in the partitions by $1$.
  Let $R_1, \dots, R_k$ be the sets that are added to $\queue$ during the
  execution of the algorithm. Whenever one of the sets $R_i$ is inserted into
  $\queue$, $\abs{R_i} - 1$ $\Union$ operations are executed. Thus, we have
  \begin{ceqn}
    \begin{equation*}
      \sum_{i=1}^k \bigl(\abs{R_i} - 1\bigr) \le \abs{F} - 1.
    \end{equation*}
  \end{ceqn}
  When $R_i$ is removed from $\queue$, exactly $\abs{A} \cdot \abs{R_i}$
  $\Find$ operations are executed in the same iteration of the
  \emph{while}-loop. The total number of $\Find$ operations is therefore
  bounded by
  \begin{ceqn}
    \begin{equation*}
      \sum_{i=1}^k \abs{A} \cdot \abs{R_i} \le
      \sum_{i=1}^k \abs{A} \cdot (2\abs{R_i} - 2) \le
      2 \abs{A} \cdot (\abs{F} - 1)
    \end{equation*}
  \end{ceqn}
  where the first inequality follows from the fact that each of the sets $R_i$
  contains at least two elements.
\end{proof}

A sequence of $n$ $\Union$- and $m$ $\Find$-operations can be performed in
$\bigO{n + m \cdot \alpha(n)}$ time where $\alpha(n)$ denotes the extremely
slow-growing \emph{inverse Ackermann function}~\cite{Tarjan75}.
Thus, when considering a fixed-size alphabet, the total running time of our
algorithm is ``almost linear'' in the number of linked pairs.

\section{Testing for strong recognition}
\label{sec:strong}

Common decision problems, such as the \emph{universality problem} or the
\emph{inclusion problem}, are easy in the case of strong recognition. In the
context of weak recognition, the algorithm presented in this section is a
powerful tool to answer a broad range of similar problems. Given a morphism
$h \colon A^+ \to S$ onto a finite semigroup $S$ and two sets of linked pairs $P, Q \subseteq S
\times E(S)$, it can be used to check whether $[P] \subseteq [Q]$. In
particular, it allows for testing whether the morphism strongly recognizes
a language $L = [P]$ by first computing the closure $Q$ of $P$ under
conjugation and then using the algorithm to test whether $[Q] \subseteq [P]$.

Before we present the algorithm, we remark that inclusion is not only a
property of the semigroup $S$ and the sets $P$ and $Q$ but it also depends on
the set of generators $h(A)$. In order to see this, we consider the finite
semigroup $S = \set{(i, j)}{1 \le i, j \le 2}$ with the multiplication given by
$(i, j) \cdot (k, \ell) = (i, \ell)$ for all $i, j, k, \ell \in \os{1, 2}$. Let
$A = \os{a, b}$ and let $h \colon A^+ \to S$ be the surjective morphism defined
by $h(a) = (1, 2)$ and $h(b) = (2, 1)$. We consider the two sets of linked
pairs $P = \os{((1, 1), (1, 1))}$ and $Q = \os{((1, 2), (2, 2))}$. It is easy
to check that $[P] = [Q] = (a^+b^+)^\omega$. However, if we add a new letter
$c$ to $A$ and extend $h$ by setting $h(c) = (1, 1)$, the infinite word
$c^\omega$ is contained in $[P]$ but not in $[Q]$, which implies $[P]
\not\subseteq [Q]$. The morphism (or another description of the set of
generators) thus needs to be part of the input of any algorithm performing the
inclusion test described above.

Let us now describe the algorithm.
It maintains two sets $R, T \subseteq S \times S^1 \times S^1$. The former
keeps record of the elements that are added to $T$ during the course of the
algorithm.
To simplify the presentation, we define $x \cdot a^{-1}$ to be the set of all
elements $p \in S^1$ which satisfy the equation $p \cdot h(a) = x$.

\begin{algorithm}[h!]
  \caption{Testing for strong recognition}
  \begin{algorithmic}
    \State initialize $R$ and $\queue$ with the set $\set{(s, e, 1)}{(s, e) \in P}$
    \While{$\queue \ne \emptyset$}
      \State remove some element $(s, x, y)$ from $\queue$
      \LineIf{$x = 1$}{\Return ``$[P] \not\subseteq [Q]$''}
      \If{$(sx, yxyx) \not\in Q$}
        \ForAll{$a \in A$, $p \in x \cdot a^{-1}$}
          \LineIf{$(s, p, h(a)y) \not\in R$}{add $(s, p, h(a)y)$ to $R$ and to $\queue$}
        \EndFor
      \EndIf
    \EndWhile
    \State \Return ``$[P] \subseteq [Q]$''
  \end{algorithmic}
  \label{alg:teststrong}
\end{algorithm}

\noindent
The following technical Lemma is crucial for the correctness proof of the
algorithm:
\begin{lemma}
  Let $u, v \in A^+$ and let $(s, e)$ and $(h(u), h(v))$ be linked pairs.
  Then $uv^\omega$ is contained in $[s][e]^\omega$ if and
  only if there exists a factorization $v = v_1 v_2$ such that $v_1 \ne
  \varepsilon$, $h(u v_1) = s$ and $h(v_2 v v_1) = e$.
  \label{lem:lpfac}
\end{lemma}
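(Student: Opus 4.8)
The statement is an equivalence, and I would prove the two directions separately; the backward implication is a short word-combinatorial identity, whereas the forward implication is the substantial part.

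For the \emph{if} direction, assume a factorization $v = v_1 v_2$ with $v_1 \ne \varepsilon$, $h(uv_1) = s$ and $h(v_2 v v_1) = e$. I would first establish the purely combinatorial identity $uv^\omega = (uv_1)(v_2 v v_1)^\omega$ on infinite words: since $v = v_1 v_2$ we have $v_2 v v_1 = v_2 v_1 v_2 v_1$, hence $(v_2 v v_1)^\omega = (v_2 v_1)^\omega$ and therefore $(uv_1)(v_2 v v_1)^\omega = u(v_1 v_2)^\omega = uv^\omega$. Because $h(uv_1) = s$, $h(v_2 v v_1) = e$, and both factors are nonempty (here $v_1 \ne \varepsilon$ and $v \ne \varepsilon$ are used), the right-hand side witnesses $uv^\omega \in [s][e]^\omega$.

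For the \emph{only if} direction, assume $uv^\omega \in [s][e]^\omega$ and fix a factorization $uv^\omega = w_0 w_1 w_2 \cdots$ into nonempty finite words with $h(w_0) = s$ and $h(w_i) = e$ for all $i \ge 1$. Let $p = \abs{u}$ and let $n_i = \abs{w_0 \cdots w_{i-1}}$ be the cut positions. Two observations drive the proof. First, every prefix image is $s$: since $se = s$ and $e \in E(S)$, we get $h(w_0 \cdots w_{i-1}) = se^{i-1} = s$ for all $i \ge 1$. Second, I exploit that $uv^\omega$ is periodic with period $\abs{v}$ from position $p$ on: if a cut lies at $n_i \ge p$ with phase $\ell = (n_i - p) \bmod \abs{v}$, then the prefix equals $u v^k v_1$, where $v_1$ is the length-$\ell$ prefix of $v$; using $h(u)h(v) = h(u)$ and idempotency of $h(v)$ this gives $h(uv_1) = h(u v^k v_1) = s$. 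Thus the equation $h(uv_1) = s$ comes for free from any cut of phase $\ell$.

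It remains to secure $h(v_2 v v_1) = e$, which is the main obstacle and where periodicity is used a second time. Since there are infinitely many cut positions and all but finitely many exceed $p$, the phase $(n_i - p) \bmod \abs{v}$ assumes some value $\ell$ infinitely often; I would then pick two such cuts $n_i < n_j$ with $n_j - n_i \ge 2\abs{v}$. The factor of $uv^\omega$ strictly between them has image $e^{j-i} = e$, and by periodicity it equals $v_2 v^m v_1$ with $m \ge 1$ (taking $v_1$ the length-$\ell$ prefix and $v_2$ the corresponding suffix of $v$); idempotency of $h(v)$ then collapses $h(v_2) h(v)^m h(v_1)$ to $h(v_2 v v_1)$, so $h(v_2 v v_1) = e$, completing the construction when $\ell \ne 0$. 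The degenerate case $\ell = 0$ must be treated separately: there the factor is a power $v^m$, which forces $e = h(v)$ and $s = h(u)$, and the choice $v_1 = v$, $v_2 = \varepsilon$ then satisfies all three requirements. I expect this phase-zero corner case, together with the period-counting that guarantees a full copy of $v$ inside the repeated factor, to be the only delicate points.
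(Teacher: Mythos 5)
Your proof is correct and follows essentially the same route as the paper: the backward direction is the same one-line factorization $uv^\omega = uv_1(v_2vv_1)^\omega$, and the forward direction is the same pigeonhole argument on the phases of the cut positions modulo $\abs{v}$, combined with the linked-pair identities $se = s$ and $h(u)h(v) = h(u)$. The only cosmetic difference is that the paper avoids your separate $\ell = 0$ case by normalizing the phase to lie in $\os{1, \dots, \abs{v}}$ (taking $v_1 = v$, $v_2 = \varepsilon$ at full-period boundaries), which your corner-case analysis handles equivalently.
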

\begin{proof}
  Let $v = a_1 a_2 \cdots a_n$ with $n \ge 1$ and $a_i \in A$. If $uv^\omega$
  is contained in $[s][e]^\omega$, there exists a factorization $uv^\omega = u'
  v_1' v_2' \cdots$ such that $h(u') = s$ and $h(v_i') = e$ for all $i \ge 1$.
  Since $u$ and $v$ are finite words, there exist indices $j > i \ge 1$, powers
  $k, \ell \ge 1$ and a position $m \in \os{1, \dots, n}$ such that $u' v_1'
  v_2' \cdots v_{i-1}' = u v^k a_1 a_2 \cdots a_m$ and $v_i' v_{i+1}' \cdots
  v_j' = a_{m+1} a_{m+2} \cdots a_n v^\ell a_1 a_2 \cdots a_m$.
  We set $v_1 = a_1 a_2 \cdots a_m$ and $v_2 = a_{m+1} a_{m+2} \cdots a_n$.
  Then $v_1 v_2 = v$,
  \begin{ceqn}
    \begin{alignat*}{4}
      h(u v_1) &= h(u v^k a_1 a_2 \cdots a_m) & &= h(u' v_1' v_2' \cdots v_{i-1}') & &= se^{i-1} & &= s \text{~and} \\
      h(v_2 v v_1) &= h(a_{m+1} a_{m+2} \cdots a_n v^\ell a_1 a_2 \cdots a_m) & &= h(v_i' v_{i+1}' \cdots v_j') & &= e^{j-i+1} & &= e.
    \end{alignat*}
  \end{ceqn}
  To prove the converse direction, consider the factorization $uv^\omega = uv_1
  (v_2 v v_1)^\omega$.
\end{proof}

To simplify the proofs of the following two Lemmas, we extend $h$ to a monoid
morphism $h^1 \colon A^* \to S^1$ by setting $h^1(u) = h(u)$ for all $u \in
A^+$ and $h^1(\varepsilon) = 1$.
\begin{lemma}
  If the difference $[P] \setminus [Q]$ is non-empty, the algorithm returns
  ``$[P] \not\subseteq [Q]$''.
  \label{lem:strong1}
\end{lemma}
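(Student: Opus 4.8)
The plan is to start from a witness for non-inclusion and to argue that the backward exploration carried out by the algorithm is bound to discover it. First I would fix a witness $\alpha \in [P] \setminus [Q]$ together with a linked pair $(s, e) \in P$ such that $\alpha \in [s][e]^\omega$, so that $[s][e]^\omega \setminus [Q]$ is non-empty. Since both $[s][e]^\omega$ and $[Q]$ are $\omega$-regular, this difference contains an ultimately periodic word, and after possibly replacing the period by a power and invoking Lemma~\ref{lem:lpfac} to reshape it, I may assume the witness has the form $u p^\omega$ with $u, p \in A^+$, $h(u) = s$ and $h(p) = e$ (in particular $e$ is idempotent and $u p^\omega \in [s][e]^\omega \setminus [Q]$). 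Write $p = b_1 b_2 \cdots b_m$ with $b_i \in A$.

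Next I would read the period $p$ backwards and match this with the triples generated by the algorithm. For $0 \le k \le m$ let $\pi_k = b_1 \cdots b_{m-k}$ and $\sigma_k = b_{m-k+1} \cdots b_m$ be the length-$(m-k)$ prefix and the length-$k$ suffix of $p$, and consider the triple $(s, x_k, y_k)$ with $x_k = h^1(\pi_k)$ and $y_k = h^1(\sigma_k)$. These should be exactly the triples produced along one branch of the exploration: the initial triple is $(s, e, 1) = (s, x_0, y_0)$, and reading the letter $b_{m-k}$ turns $(s, x_k, y_k)$ into $(s, x_{k+1}, y_{k+1})$ because $x_{k+1} h(b_{m-k}) = x_k$ (so $x_{k+1} \in x_k \cdot b_{m-k}^{-1}$) and $h(b_{m-k}) y_k = y_{k+1}$. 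The terminal triple is $(s, 1, e) = (s, x_m, y_m)$, on which the algorithm returns ``$[P] \not\subseteq [Q]$''. Along the way I would record the invariant $x_k y_k = e$; together with $se = s$ this makes $(s x_k, y_k x_k y_k x_k)$ a genuine linked pair, whose idempotent component is $y_k x_k y_k x_k = y_k e x_k$.

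The key step is to argue that the $Q$-test never blocks this branch, \ie that $(s x_k, y_k x_k y_k x_k) \notin Q$ for every $k < m$. Here I would compute $s x_k = h(u \pi_k)$ and $y_k x_k y_k x_k = h\bigl((\sigma_k \pi_k)^2\bigr) = h(\sigma_k p \pi_k)$, and rewrite the witness as $u p^\omega = (u \pi_k)(\sigma_k \pi_k)^\omega = (u \pi_k)(\sigma_k p \pi_k)^\omega$. This exhibits $u p^\omega \in [s x_k][y_k x_k y_k x_k]^\omega$, so if the linked pair $(s x_k, y_k x_k y_k x_k)$ were in $Q$, we would obtain $u p^\omega \in [Q]$, contradicting the choice of the witness. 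Hence every triple on the branch passes the condition $(sx, yxyx) \notin Q$ and is expanded.

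Finally, I would assemble these facts by induction on $k$. As long as the algorithm has not already returned ``$[P] \not\subseteq [Q]$'', each $(s, x_k, y_k)$ is eventually placed in $\queue$ (the initial one during initialization, each successor when its predecessor is expanded, which happens since the $Q$-test is passed) and then removed and processed; the record $R$ only suppresses re-insertion of already handled triples, so it cannot destroy the branch. Processing the terminal triple $(s, 1, e)$ triggers the $x = 1$ line and returns ``$[P] \not\subseteq [Q]$''; conversely, the only way the loop can end with ``$[P] \subseteq [Q]$'' is by emptying $\queue$, which cannot happen while this triple is pending. I expect the main obstacle to be bookkeeping rather than depth: verifying that the triple-to-word correspondence is set up so that $(sx, yxyx)$ is always a valid linked pair and that its membership in $Q$ really forces $u p^\omega \in [Q]$, and giving the order-independent argument that the relevant branch is explored no matter how elements are pulled from $\queue$.
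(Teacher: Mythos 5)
Your proof is correct and takes essentially the same route as the paper's: normalize the witness to $u p^\omega$ with $h(u) = s$, $h(p) = e$ and $(s, e) \in P$ via Lemma~\ref{lem:lpfac}, then induct along the positions of the period to show that every triple $(s, h^1(\pi_k), h^1(\sigma_k))$ is added to $R$ and $\queue$, the key point being that $(sx, yxyx) \in Q$ would force $u p^\omega \in [Q]$, so the terminal triple $(s, 1, e)$ is eventually processed and the $x = 1$ line fires. Your deviations are cosmetic or mild strengthenings of what the paper leaves implicit: you index by suffix length rather than prefix length, you spell out the rewriting $u p^\omega = (u \pi_k)(\sigma_k p \pi_k)^\omega$ behind the claim that the $Q$-test passes, and you explicitly handle early returns and the order-independence of the $\queue$ processing.
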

\begin{proof}
  By the closure properties of regular languages, we know that there exists a
  word $\alpha = u(a_1 a_2 \cdots a_n)^\omega \in [P] \setminus [Q]$.  Let $s =
  h(u)$ and $e = h(a_1 a_2 \cdots a_n)$. Lemma~\ref{lem:lpfac} shows that we
  can assume without loss of generality that $(s, e)$ is contained in $P$.
  We now prove by induction on the parameter $k$ that upon termination, we have
  $(s, h^1(a_1 a_2 \cdots a_k), h^1(a_{k+1} a_{k+2} \cdots a_n)) \in R$ for all
  $k \in \os{0, \dots, n}$.
  In particular, by considering the case $k = 0$, we see that the element $(s,
  1, e)$ is added to $R$. Since every element added to $R$ is also added to
  $Q$, the algorithm returns ``$[P] \not\subseteq [Q]$''.
  
  The base case $k = n$ is covered by the initialization of the set $R$.
  Let now $k < n$, $x = h^1(a_1 a_2 \cdots a_{k+1})$ and $y = h^1(a_{k+2}
  a_{k+3} \cdots a_n)$.
  By the induction hypothesis, we know that the tuple $(s, x, y)$ is added to
  $\queue$ during the course of the algorithm. Consider the iteration when
  this tuple is removed from $\queue$.
  Because of $\alpha \not\in [Q]$, we know that $(sx, yxyx) \not\in Q$. Thus
  the inner loop guarantees that $(s, h^1(a_1 a_2 \cdots a_k), h^1(a_{k+1}
  a_{k+2} \cdots a_n))$ is added to $R$.
\end{proof}

\begin{lemma}
  If the algorithm returns ``$[P] \not\subseteq [Q]$'', the difference $[P]
  \setminus [Q]$ is non-empty.
  \label{lem:strong2}
\end{lemma}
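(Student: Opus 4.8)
The plan is to turn the event that triggers the output ``$[P] \not\subseteq [Q]$'' into an explicit witness $\alpha \in [P] \setminus [Q]$. The algorithm returns ``$[P] \not\subseteq [Q]$'' only when it removes a triple of the form $(s, 1, y)$ from $\queue$, and since every element added to $\queue$ is also added to $R$, we have $(s, 1, y) \in R$. First I would show that this triple carries enough information to recover a finite word $p$ with $h(p) = y$ together with a linked pair $(s, e) \in P$ satisfying $y = e$; then, choosing any $u \in A^+$ with $h(u) = s$ (possible since $h$ is onto), the ultimately periodic word $\alpha = u p^\omega$ lies in $[s][e]^\omega \subseteq [P]$. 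The real content is to certify that $\alpha \notin [Q]$.

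The difficulty is that the algorithm manipulates only semigroup elements, not words, so I have to maintain an invariant strong enough both to reconstruct a witnessing word and to record which ``cut points'' of its period have already been tested against $Q$. Concretely, I would prove by induction on the order of insertion that every $(s, x, y) \in R$ admits a linked pair $(s, e) \in P$ and a word $p$ with $h^1(p) = y$ and $xy = e$, such that for every factorization $p = p' p''$ with $p' \ne \varepsilon$ the pair $(s\cdot x\cdot h^1(p'),\; h^1(p'')\cdot e\cdot x\cdot h^1(p'))$ is not in $Q$. The base case is the initialization $(s, e, 1)$ with $p = \varepsilon$, where the last condition is vacuous. For the inductive step, a new triple $(s, p_0, h(a)y)$ is created from $(s, x, y)$ only after the test $(sx, yxyx) \not\in Q$ succeeds and $p_0\,h(a) = x$; taking the new word to be $a\,p$, the only genuinely new factorization is $p' = a$, whose associated pair simplifies (using $xy = e$ and $p_0\,h(a) = x$) to exactly $(sx, yxyx)$, which the successful test certifies, while all remaining factorizations reduce to factorizations of $p$ and are handled by the induction hypothesis.

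With the invariant in hand the conclusion is short. Applying it to the returned triple $(s, 1, y)$ gives $y = e$, a word $p$ with $h(p) = e$ (and $p \ne \varepsilon$ because $e \in S$ while $1 \notin S$), and the property that $(s\,h^1(p'),\; h^1(p'')\,e\,h^1(p')) \notin Q$ for every factorization $p = p' p''$ with $p' \ne \varepsilon$. To see $\alpha = u p^\omega \notin [Q]$, I would argue by contradiction: if $\alpha \in [s'][e']^\omega$ for some $(s', e') \in Q$, then, since $(h(u), h(p)) = (s, e)$ is a linked pair, Lemma~\ref{lem:lpfac} yields a factorization $p = p_1 p_2$ with $p_1 \ne \varepsilon$, $h(u p_1) = s'$ and $h(p_2 p p_1) = e'$, that is $(s', e') = (s\,h^1(p_1),\; h^1(p_2)\,e\,h^1(p_1))$\,---\,a pair the invariant forbids from lying in $Q$. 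Hence $\alpha \in [P] \setminus [Q]$. I expect the main obstacle to be pinning down this invariant precisely: matching the purely algebraic test $(sx, yxyx) \not\in Q$ to the word-level membership criterion of Lemma~\ref{lem:lpfac}, and verifying that the set of certified cut points grows by exactly one per backward step, so that when $x$ finally reaches $1$ every cut point of the full period has been covered.
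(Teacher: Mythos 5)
Your proposal is correct and matches the paper's own argument: the paper likewise reconstructs the period word backwards along the run (labelling each triple $(s,x,y)\in R$ with a word $w[s,x,y]$ whose image under $h^1$ is $y$ and with $(s,xy)\in P$), takes the witness $u\,w[s,1,y]^\omega$, and uses Lemma~\ref{lem:lpfac} to check that every cut-point pair $(h(uv_1a),h(v_2vv_1a))$ was excluded from $Q$ by the test $(sx,yxyx)\not\in Q$ at the parent triple. Your explicit invariant, proved by induction on insertion order, is exactly the property the paper verifies for these labels, so the two proofs differ only in bookkeeping.
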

\begin{proof}
  We construct a word in the difference $[P] \setminus [Q]$.
  For every triple $(s, e, 1)$ that is added to $R$ during the initialization,
  we define $w[s, e, 1] = \varepsilon$. If a triple $(s, p, h(a)y)$ is added to
  $R$ later, we set $w[s, p, h(a)y] = a \cdot w[s, p \cdot h(a), y]$.
  For every $(s, x, y) \not\in R$, the word $w[s, x, y]$ is undefined.
  For the other words, well-definedness follows from the fact that each triple
  $(s, x, y)$ is added to $R$ at most once.
  Furthermore, if $w[s, x, y]$ is defined, its image under $h^1$ is $y$ and we
  have $(s, xy) \in P$. Both properties are easy to prove by induction.

  Let $(s, 1, y)$ be the triple that was removed from $\queue$ immediately
  before the termination of the algorithm.
  Consider an arbitrary word $u \in [s]$ and set $v = w[s, 1, y]$.
  We have $(s, y) \in P$ and thus $uv^\omega \in [P]$.
  For every factorization $v = v_1 a v_2$ where $v_1, v_2 \in A^*$ and $a \in
  A$, the word $w[s, h^1(v_1), h^1(a v_2)]$ is defined as $a v_2$ and thus, the
  tuple $(h(u v_1 a), h(v_2 v v_1 a))$ is not contained in $Q$. In view of
  Lemma~\ref{lem:lpfac}, this shows that $uv^\omega \not\in [Q]$.
\end{proof}

We are now able to state the main result of this section:
\begin{theorem}
  Given a morphism $h \colon A^+ \to S$ onto a finite semigroup $S$ and two
  sets of linked pairs $P, Q \subseteq S \times E(S)$, one can check in $\bigO{\abs{A} \cdot
  \abs{S}^3}$ time whether $[P] \subseteq [Q]$.
  \label{thm:strong}
\end{theorem}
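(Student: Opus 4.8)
The plan is to separate the argument into correctness and running time. Correctness follows almost immediately from the two preceding lemmas, so I would dispose of it first. Termination holds because every triple that enters $\queue$ is simultaneously recorded in $R$, and the membership test against $R$ prevents any triple from being enqueued a second time; since $R \subseteq S \times S^1 \times S^1$ is finite, the \emph{while}-loop performs only finitely many iterations. Combining Lemma~\ref{lem:strong1} and Lemma~\ref{lem:strong2} then shows that the algorithm answers ``$[P] \not\subseteq [Q]$'' if and only if $[P] \setminus [Q]$ is non-empty, so the returned answer is always correct.

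The real work is the running-time bound, and this is the step I expect to be the main obstacle. A crude count is too weak: there are $\bigO{\abs{S}^3}$ distinct triples in $S \times S^1 \times S^1$, and if the inner loop scanned all of $S^1$ for the candidate $p$ at each of them, the cost would be $\bigO{\abs{A} \cdot \abs{S}^4}$. To sharpen this, I would first precompute, for every letter $a \in A$, a bucketing of $S^1$ according to the value $p \cdot h(a)$; this costs $\bigO{\abs{A} \cdot \abs{S}}$ and allows the algorithm to enumerate each preimage set $x \cdot a^{-1}$ in time proportional to its cardinality.

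The decisive observation is that the first component $s$ of every triple is copied unchanged into the newly generated triples $(s, p, h(a)y)$, so it ranges over at most $\abs{S}$ values. Since each triple is processed at most once, the total number of inner-loop body executions is bounded by
\begin{ceqn}
  \begin{equation*}
    \sum_{s} \sum_{y \in S^1} \sum_{x \in S^1} \sum_{a \in A} \abs{x \cdot a^{-1}},
  \end{equation*}
\end{ceqn}
and here the telescoping identity $\sum_{x \in S^1} \abs{x \cdot a^{-1}} = \abs{S^1}$ applies, since every $p \in S^1$ belongs to exactly one preimage set, namely the one for $x = p \cdot h(a)$. This collapses the two innermost sums to $\abs{A} \cdot \abs{S^1}$ and bounds the total by $\abs{S} \cdot \abs{S^1} \cdot \abs{A} \cdot \abs{S^1} = \bigO{\abs{A} \cdot \abs{S}^3}$. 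As each inner-loop step performs only constant work (one membership test in $R$ together with the products forming the new triple), and each of the $\bigO{\abs{S}^3}$ outer iterations adds only the constant-time tests $x = 1$ and $(sx, yxyx) \in Q$, the overall running time including precomputation is $\bigO{\abs{A} \cdot \abs{S}^3}$, as claimed.
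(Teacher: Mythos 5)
Your proposal is correct and follows essentially the same route as the paper: correctness via Lemmas~\ref{lem:strong1} and~\ref{lem:strong2}, the outer loop bounded by the $\bigO{\abs{S}^3}$ distinct triples in $R$, and the inner-loop total controlled by the fact that the sets $x \cdot a^{-1}$ partition $S^1$ for each fixed $a$ (your identity $\sum_{x} \abs{x \cdot a^{-1}} = \abs{S^1}$ is exactly the paper's disjointness observation, counted from the other side). Your explicit $\bigO{\abs{A} \cdot \abs{S}}$ precomputation of the preimage buckets is a welcome detail that the paper leaves implicit in its model of computation.
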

\begin{proof}
  The correctness of Algorithm~\ref{alg:teststrong} follows from the previous
  two Lemmas.
  Since $R$ contains at most $(\abs{S} + 1)^3$ elements when the algorithm
  terminates, the outer loop is executed at most $(\abs{S} + 1)^3$ times.
  Moreover, for all $a \in A$ and $s, t \in S$ with $s \ne t$, the sets $s
  \cdot a^{-1}$ and $t \cdot a^{-1}$ are disjoint. Thus, each element $p \in
  S^1$ is considered at most $\abs{A} \cdot (\abs{S} + 1)^2$ times in the inner
  loop.
  If $R$ is implemented as a bit field and $\queue$ is implemented as a linked
  list, all operations take constant time.
  This shows that the total running time is in $\bigO{\abs{A} \cdot
  \abs{S}^3}$.
\end{proof}

\section{Computation of the syntactic morphism}
\label{sec:syntactic}

In this section, we present an algorithm to compute the syntactic semigroup for
a given language. The syntactic homomorphism is obtained as a byproduct. One
can show that the syntactic semigroup is the smallest semigroup strongly
recognizing a language~\cite{Arnold85,PerrinPin04}, so this operation is
similar to the minimization of finite automata. The most important difference
is that our algorithm requires only quadratic time, whereas minimization is
\textsf{PSPACE}-hard in the case of B\"uchi
automata~\cite{MeyerStockmeyer72,SistlaVardiWolper87}.

Let $S$ be a finite semigroup, let $h \colon A^+ \to S$ be a surjective
morphism and let $P$ be a set of linked pairs that is closed under conjugation.
To make the following notation more readable, we define $Q$ as the maximal
subset of $S \times S$ such that $[P] = [Q]$.
\begin{lemma}
  Let $u, v \in A^+$. Then $uv^\omega \in [P]$ if and only if $(h(u), h(v)) \in
  Q$.
  \label{lem:maximal}
\end{lemma}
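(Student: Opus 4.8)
The plan is to prove the biconditional of Lemma~\ref{lem:maximal} by analyzing what the maximal set $Q$ with $[P] = [Q]$ must contain. Recall $Q$ is defined as the largest subset of $S \times S$ satisfying $[Q] = [P]$, so the key is to characterize membership $(h(u), h(v)) \in Q$ in terms of the language $[P]$.

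First I would handle the direction from right to left. Suppose $(h(u), h(v)) \in Q$. Since $uv^\omega \in [h(u)][h(v)]^\omega$ by the very definition of the bracket notation, and since $[h(u)][h(v)]^\omega \subseteq [Q] = [P]$ whenever $(h(u), h(v)) \in Q$, we immediately obtain $uv^\omega \in [P]$. This direction is essentially definitional and should be short.

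The more substantive direction is left to right: assuming $uv^\omega \in [P]$, I must show $(h(u), h(v)) \in Q$. Because $Q$ is the \emph{maximal} set with $[Q] = [P]$, it suffices to show that adjoining the pair $(h(u), h(v))$ to $Q$ does not enlarge the recognized language, i.e. $[h(u)][h(v)]^\omega \subseteq [P]$; maximality then forces $(h(u), h(v)) \in Q$. To establish this inclusion I would invoke Proposition~\ref{prop:strong}: since $P$ is closed under conjugation, $h$ strongly recognizes $[P]$, and therefore $[s][t]^\omega \intersect [P] \ne \emptyset$ implies $[s][t]^\omega \subseteq [P]$ for all $s, t \in S$. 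Taking $s = h(u)$ and $t = h(v)$, the hypothesis $uv^\omega \in [P]$ witnesses that the intersection $[h(u)][h(v)]^\omega \intersect [P]$ is non-empty, so the proposition yields $[h(u)][h(v)]^\omega \subseteq [P] = [Q]$, which is exactly the inclusion needed.

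The main obstacle I anticipate is a subtle well-definedness point hidden in the maximality argument: one must check that $Q \union \os{(h(u), h(v))}$ still satisfies $[\,\cdot\,] = [P]$, which amounts precisely to the inclusion $[h(u)][h(v)]^\omega \subseteq [P]$ just proved, together with the trivial observation that enlarging the index set can only enlarge $[\,\cdot\,]$. A secondary care point is ensuring $v \neq \varepsilon$ so that $h(v)^\omega$ is well defined and the bracket notation applies; this holds because $u, v \in A^+$ by hypothesis. Once the inclusion is in hand, the appeal to maximality of $Q$ closes the argument cleanly, and no heavy computation is required.
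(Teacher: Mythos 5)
Your proof is correct and takes essentially the same route as the paper: the right-to-left direction is definitional, and for left-to-right you invoke Proposition~\ref{prop:strong} (applicable because $P$ is closed under conjugation, so $h$ strongly recognizes $[P]$) to get $[h(u)][h(v)]^\omega \subseteq [P] = [Q]$, then conclude by maximality of $Q$. Your explicit unpacking of the maximality step\,---\,that adjoining $(h(u), h(v))$ to $Q$ cannot enlarge the recognized language\,---\,is precisely what the paper's terser ``Since $Q$ is maximal'' leaves implicit.
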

\begin{proof}
  Suppose that $uv^\omega \in [P]$. By Proposition~\ref{prop:strong}, we have
  $[h(u)][h(v)]^\omega \subseteq [P] = [Q]$. Since $Q$ is maximal, the pair
  $(h(u), h(v))$ is contained in $Q$.
  The converse implication is trivial.
\end{proof}

We now define a equivalence relation $\cong$ on $S$ by $s \cong t$ if for
all $z \in S$, we have
\begin{ceqn}
  \begin{align*}
    (z, s) \in Q & \Leftrightarrow (z, t) \in Q \text{~and~} \\
    (s, z) \in Q & \Leftrightarrow (t, z) \in Q.
  \end{align*}
\end{ceqn}
Moreover, let $\equiv$ be the coarsest congruence on $S$ that refines
$\cong$, \ie$s \equiv t$ if $xsy \cong xty$ for all $x, y \in S^1$.
We denote by $[s]_\equiv$ the equivalence class $\set{t \in S}{t \equiv s}$
of an element $s \in S$.
The relation $\equiv$ is closely related to the syntactic congruence, as
confirmed by the following result:
\begin{proposition}
  The quotient semigroup ${S} / {\equiv}$ is isomorphic to $\Synt$.
  \label{prop:syntactic}
\end{proposition}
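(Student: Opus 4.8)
The plan is to reduce the whole statement to a single equivalence pulled back along $h$: I will show that for all $u, v \in A^+$,
\[
  u \equiv_L v \;\Leftrightarrow\; h(u) \equiv h(v),
\]
and refer to this as $(\star)$. Once $(\star)$ is available, the map sending the syntactic class $h_L(u)$ to $[h(u)]_\equiv$ is well defined and injective by $(\star)$, surjective because $h$ is onto $S$, and multiplicative because $h$ is a morphism while both $\equiv_L$ and $\equiv$ are congruences; hence it is the required isomorphism from $\Synt$ to $S/\mathord\equiv$. So the entire argument concentrates on $(\star)$, and all the content of the proposition lives there.

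To prove $(\star)$, the first step is to rewrite the defining conditions of $\equiv_L$ as conditions on $Q$, where $L = [P]$. Using the monoid extension $h^1$ and Lemma~\ref{lem:maximal}, for $x, y \in A^*$ and $z \in A^+$ with $X = h^1(x)$, $Y = h^1(y)$, $Z = h(z)$ and $s = h(u)$ one has $(xuy)z^\omega \in L \Leftrightarrow (XsY, Z) \in Q$ and $z(xuy)^\omega \in L \Leftrightarrow (Z, XsY) \in Q$; note $XsY \in S$ since $u \in A^+$. As $x, y$ range over $A^*$ and $z$ over $A^+$, the pair $(X, Y)$ ranges over all of $S^1 \times S^1$ and $Z$ over all of $S$. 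Writing $t = h(v)$, the two families of equivalences defining $u \equiv_L v$ therefore become: for all $X, Y \in S^1$ and all $Z \in S$,
\[
  (XsY, Z) \in Q \;\Leftrightarrow\; (XtY, Z) \in Q
  \qquad\text{and}\qquad
  (Z, XsY) \in Q \;\Leftrightarrow\; (Z, XtY) \in Q,
\]
together with the single \emph{diagonal} condition coming from the empty prefix $z = \varepsilon$ in the second family, namely $(XsY, XsY) \in Q \Leftrightarrow (XtY, XtY) \in Q$ for all $X, Y \in S^1$ (to bring this into the scope of Lemma~\ref{lem:maximal} I rewrite $(xuy)^\omega = (xuy)(xuy)^\omega$). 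Care with the empty-word cases $x, y, z = \varepsilon$ via $h^1$ and $S^1$ is needed here but is routine.

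Comparing with the definition of $\equiv$, the two displayed equivalences say precisely that $XsY \cong XtY$ for all $X, Y \in S^1$, i.e. $s \equiv t$. Hence the implication $u \equiv_L v \Rightarrow h(u) \equiv h(v)$ is immediate, since those equivalences are among the defining conditions of $\equiv_L$. The converse is where the only genuinely non-formal point lies, and I expect it to be the main obstacle: assuming $s \equiv t$ I must recover the diagonal condition, which is not literally one of the two displayed equivalences. The plan is to obtain it by a short chain: applying the second equivalence with $Z = XsY$ and the first with $Z = XtY$ gives $(XsY, XsY) \in Q \Leftrightarrow (XsY, XtY) \in Q \Leftrightarrow (XtY, XtY) \in Q$, which is exactly the diagonal condition. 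This shows that $s \equiv t$ already forces all defining conditions of $u \equiv_L v$, closing $(\star)$; the verification that $h_L(u) \mapsto [h(u)]_\equiv$ respects products then finishes the proof.
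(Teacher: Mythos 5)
Your proof is correct and follows essentially the same route as the paper: the paper likewise defines $g(u) = [h(u)]_\equiv$ and derives $h(u) \equiv h(v) \Leftrightarrow h_L(u) = h_L(v)$ from Lemma~\ref{lem:maximal}, then concludes that $g \circ h_L^{-1}$ is an isomorphism. Your write-up merely makes explicit the details the paper leaves implicit\,---\,in particular the translation of the two defining families of $\equiv_L$ into conditions on $Q$ and the correct handling of the diagonal case $z = \varepsilon$ via the chain $(XsY, XsY) \in Q \Leftrightarrow (XsY, XtY) \in Q \Leftrightarrow (XtY, XtY) \in Q$.
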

\begin{proof}
  We first define a morphism $g \colon A^+ \to {S} / {\equiv}$ by setting
  $g(u) = [h(u)]_\equiv$ for all $u \in A^+$. Let now $u, v \in A^+$. By
  Lemma~\ref{lem:maximal}, we have $h(u) \equiv h(v)$ if and only if $h_L(u) =
  h_L(v)$. Thus, $g \circ h_L^{-1}$ is a semigroup isomorphism.
\end{proof}

\noindent
The computation of the syntactic semigroup requires two steps:
\begin{enumerate}
  \item Compute the partition induced by the equivalence relation $\cong$.
  \item Refine the partition until the underlying equivalence relation becomes
    a congruence.
\end{enumerate}
The first step can be performed in time quadratic in the size of the semigroup.
For the second step, we can adapt Hopcroft's minimization algorithm for finite
automata~\cite{Hopcroft71}.
For $C \subseteq S$ and $a \in A$, we define
\begin{ceqn}
  \begin{align*}
    C \cdot a^{-1} = \set{s \in S}{s \cdot h(a) \in C} \quad\text{and}\quad
    a^{-1} \cdot C = \set{s \in S}{h(a) \cdot s \in C}.
  \end{align*}
\end{ceqn}
The full algorithm is shown in Algorithm~\ref{alg:minimization}. It relies on
the $\Split$ routine that is usually implemented as part of a \emph{partition
refinement data structure}, see \eg\cite{Hopcroft71} for details. Its semantics
is shown in Algorithm~\ref{alg:split}.
In addition to modifying the partition, that routine also updates a set $\queue
\subseteq 2^S$ that is used in the main algorithm.
\begin{algorithm}[h!]
  \caption{Computing the syntactic semigroup}
  \begin{algorithmic}
    \State initialize a partition with a single class $S$
    \ForAll{$s \in S$}
      \State $\Split(\set{t \in S}{(s, t) \in Q})$
      \State $\Split(\set{t \in S}{(t, s) \in Q})$
    \EndFor
    \State initialize $\queue$ with the non-trivial classes of the partition
    \While{$\queue \ne \emptyset$}
      \State remove some set C from $\queue$
      \ForAll{$a \in A$}
        \State $\Split(C \cdot a^{-1})$ \Comment Refine the partition and update $\queue$
        \State $\Split(a^{-1} \cdot C)$ \Comment Refine the partition and update $\queue$
      \EndFor
    \EndWhile
  \end{algorithmic}
  \label{alg:minimization}
\end{algorithm}
\begin{algorithm}[h!]
\caption{The $\Split$ operation to refine a partition $\partP$}
\begin{algorithmic}
\Procedure{\Split}{$X$}
  \ForAll{$C \in \partP$}
    \State $C_1 \gets C \intersect X$, $C_2 \gets C \setminus X$
    \If{$C_1 \ne \emptyset$ and $C_2 \ne \emptyset$}
      \State $\partP \gets (\partP \setminus \os{C}) \union \os{C_1, C_2}$
      \If{$C \in \queue$}
        \State $\queue \gets (\queue \setminus \os{C}) \union \os{C_1, C_2}$
      \Else
        \LineIfElse{$\abs{C_1} \le \abs{C_2}$}{$\queue \gets \queue \union \os{C_1}$}{$\queue \gets \queue \union \os{C_2}$}
      \EndIf
    \EndIf
  \EndFor
\EndProcedure
\end{algorithmic}
\label{alg:split}
\end{algorithm}

The next Lemma shows that upon termination, the equivalence relation induced by
the partition is indeed a congruence:
\begin{lemma}
  If, upon termination, the elements $s$ and $t$ belong to the same class of
  the partition, then, for each $a \in A$, the elements $h(a)s$ and $h(a)t$ are
  in the same class as well.
  \label{lem:congruence}
\end{lemma}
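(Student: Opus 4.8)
The plan is to recast the statement as a \emph{stability} property of the partition that the algorithm produces and to prove that property by tracking the queue $\queue$, in the same spirit as the proof of Lemma~\ref{lem:left-stable}. First I would reduce the lemma to the claim that, upon termination, the partition is stable: for every class $C$, every class $D$, and every $a \in A$, one has $D \subseteq a^{-1} \cdot C$ or $D \intersect (a^{-1} \cdot C) = \emptyset$. Granting this, the lemma is immediate. If $s$ and $t$ lie in a common class $D$ and $h(a)s$ lies in the class $C$, then $s \in a^{-1} \cdot C$, so $D \intersect (a^{-1} \cdot C) \ne \emptyset$, and stability forces $D \subseteq a^{-1} \cdot C$; hence $t \in a^{-1} \cdot C$, that is $h(a)t \in C$, and so $h(a)s$ and $h(a)t$ share the class $C$.

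To establish stability I would adapt Hopcroft's correctness argument, which is the split-analogue of the merge-based reasoning behind Lemma~\ref{lem:left-stable}. The overall shape is a minimal-counterexample argument over the iterations of the \emph{while}-loop: assuming the final partition is not stable, I would look at the earliest iteration at which a relevant separation occurs, identify the single $\Split(a^{-1} \cdot C)$ (or, symmetrically, $\Split(C \cdot a^{-1})$) that caused it, and trace the responsible class $C$ back to the iteration in which it was placed into $\queue$. Since each refinement step compares elements only through their images under left and right multiplication by $h(a)$, the inductive step must show that the separation enforced at iteration $i$ was already forced at an earlier stage, so that emptiness of $\queue$ upon termination leaves no unstable pair; this is the point at which the maintained queue invariant does the work, just as the act of enqueueing $R$ did in Lemma~\ref{lem:left-stable}.

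The step I expect to be the main obstacle is exactly the bookkeeping of $\queue$ under the smaller-half rule of $\Split$ (Algorithm~\ref{alg:split}). A class that is still in $\queue$ contributes both of its parts when split, but a class that has already been removed from $\queue$ contributes only its smaller part, and the correctness rests on the non-obvious fact that splitting by the smaller part\,---\,given that the whole was used as a splitter earlier\,---\,recovers the separating power of splitting by the larger part as well. The naive invariant ``every class outside $\queue$ is one with respect to which the partition is already stable'' is \emph{not} preserved under this rule, so the delicate part is to formulate the invariant that actually survives a $\Split$ and to verify that the classes seeded into $\queue$ during the initialization\,---\,produced while refining the single class $S$ down to the classes of $\cong$\,---\,suffice to make that invariant hold at the start of the loop. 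This is precisely Hopcroft's insight, and it is the one place where the proof goes beyond a routine dualization of Lemma~\ref{lem:left-stable}; once the invariant is in place, its preservation and the emptiness of $\queue$ at termination yield the lemma through the reduction above.
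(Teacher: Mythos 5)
Your reduction of the lemma to stability of the final partition is correct, and the skeleton of your second paragraph---trace the split that first separates $h(a)s$ from $h(a)t$ back through $\queue$, in the style of Lemma~\ref{lem:left-stable}---is exactly the shape of the paper's argument. But your proposal stops short precisely at the point you yourself flag as the main obstacle: you assert that correctness hinges on a delicate Hopcroft-style invariant compensating for the smaller-half rule, and you leave that invariant unformulated (``the delicate part is to formulate the invariant that actually survives a $\Split$''). As submitted, this is a plan with its central step missing, so it does not yet constitute a proof.

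Moreover, the difficulty is mislocated. The stability-transfer fact you invoke---that splitting by $C$ and by $C_1$ subsumes splitting by $C_2 = C \setminus C_1$---is what justifies the smaller-half rule in the $\bigO{\abs{A}\cdot\abs{S}\log\abs{S}}$ running-time analysis; it is not needed for this correctness lemma, which is why the paper's proof is four sentences long. Argue contrapositively: suppose $h(a)s$ and $h(a)t$ end up in different classes, and consider the $\Split$ that first separates them. At that moment the two images straddle the two parts $C_1, C_2$ of the split class, so \emph{either} part contains exactly one of $h(a)s, h(a)t$ and is by itself a sufficient future splitter for the pair $(s,t)$; hence whichever part Algorithm~\ref{alg:split} enqueues (both if the class was in $\queue$, the smaller one otherwise) is a witness in $\queue$ containing exactly one of the two images. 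While such a witness waits in $\queue$, any further split replaces it by both of its halves, one of which still contains exactly one of the two images; so by induction some witness $C$ is eventually dequeued, and then $\Split(a^{-1}\cdot C)$ forces $s$ and $t$ into different classes, since exactly one of them lies in $a^{-1}\cdot C$. Emptiness of $\queue$ at termination then yields the lemma. The ``separating power of the larger half'' is needed only for pairs with one image inside the non-enqueued half and the other image outside the parent class, and those pairs never enter this argument: the pair under consideration has its images split \emph{across} the two halves. The one place where your caution is warranted is the base case: Algorithm~\ref{alg:minimization} seeds $\queue$ with the non-trivial classes only, so for a separation occurring during initialization one must still check that a witness containing exactly one of the two images actually enters $\queue$; this is the point the paper's own proof treats most tersely, and an honest write-up of your minimal-counterexample argument should address it explicitly rather than the smaller-half rule.
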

\begin{proof}
  Suppose that $h(a) \cdot s$ and $h(a) \cdot t$ belong to different classes.
  These elements are split either during the initialization or in the main
  loop. In either case, a set $C$ that contains either $h(a) \cdot s$ or $h(a)
  \cdot t$ is added to $\queue$. When this set is removed from $\queue$, the
  operation $\Split(a^{-1} \cdot C)$ asserts that $s$ and $t$ lie in different
  classes as well.
\end{proof}

There is of course a dual statement for the elements $s \cdot h(a)$ and $t
\cdot h(a)$.
\begin{theorem}
  \label{thm:syntactic}
  The syntactic morphism can be computed in $\bigO{\abs{S}^2 + \abs{A} \cdot
  \abs{S}\log{\abs{S}}}$ time.
\end{theorem}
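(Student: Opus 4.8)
The statement asserts a running time, and correctness of Algorithm~\ref{alg:minimization} is essentially already supplied by the preceding results, so my plan is to first recap why the output is the $\equiv$-partition and then carry out a two-phase time analysis. For correctness, I would observe that after the initialization loop two elements $t, t'$ lie in the same class precisely when $(s,t)\in Q \Leftrightarrow (s,t')\in Q$ and $(t,s)\in Q \Leftrightarrow (t',s)\in Q$ hold for every $s$, which is exactly $t\cong t'$; hence the initial partition realizes $\cong$. The main loop only ever refines, so the result refines $\cong$, and by Lemma~\ref{lem:congruence} together with its dual it is a congruence. As in the correctness argument for Theorem~\ref{thm:conjugacy}, a block is split only when some letter forces it, so the outcome is the coarsest congruence refining $\cong$, namely $\equiv$. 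Proposition~\ref{prop:syntactic} then identifies $S/{\equiv}$ with $\Synt$, and the syntactic morphism is read off from $a \mapsto [h(a)]_\equiv$ on generators in $\bigO{\abs{A}}$ time.

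Next I would treat the initialization phase. The sets $\set{t}{(s,t)\in Q}$ and $\set{t}{(t,s)\in Q}$ require deciding membership in $Q$, which is not part of the input, so I would reduce this to the input set $P$. For representatives $u\in[s]$, $v\in[t]$ and the idempotent power $e_t = t^k$ of $t$, the identity $uv^\omega = u(v^k)^\omega$ combined with Lemma~\ref{lem:maximal} yields $(s,t)\in Q \Leftrightarrow (se_t, e_t)\in Q$; since $(se_t,e_t)$ is a linked pair and $P$ is closed under conjugation, one checks that $Q$ and $P$ agree on linked pairs, so that $(s,t)\in Q \Leftrightarrow (se_t, e_t)\in P$ becomes a constant-time test once all $e_t$ are precomputed. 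Precomputing the $e_t$ and filling the two $Q$-indexed sets for every $s$ costs $\bigO{\abs{S}^2}$, and the $2\abs{S}$ $\Split$ calls, each on a set of size at most $\abs{S}$, add a further $\bigO{\abs{S}^2}$ under a partition-refinement structure.

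The real work lies in showing that the main loop runs in $\bigO{\abs{A}\cdot\abs{S}\log\abs{S}}$ via the Hopcroft smaller-half argument. The key invariant I would establish is that the worklist sets are always current blocks (hence pairwise disjoint), and that if $W$ is processed and $W'$ is the next processed set with some $x \in W \intersect W'$, then $\abs{W'}\le\abs{W}/2$: after $W$ leaves $\queue$ the block of $x$ can re-enter $\queue$ only as a smaller half, of size at most $\abs{W}/2$, and the ``replace a pending splitter by both halves'' branch only shrinks it further before it is processed again. Consequently each element lies in at most $\log_2\abs{S}+1$ processed sets. Bounding the cost of processing $W$ for a letter $a$ by $\bigO{\abs{W} + \abs{W\cdot a^{-1}} + \abs{a^{-1}\cdot W}}$ and writing $\mathrm{indeg}_a(c)=\abs{\set{s\in S}{s\cdot h(a)=c}}$, the dominant term sums to $\sum_a \sum_c \mathrm{indeg}_a(c)\cdot(\#\text{processed sets containing } c)$, where $\sum_c \mathrm{indeg}_a(c)=\abs{S}$ for each fixed $a$; the $\log$-bound on appearances then gives $\bigO{\abs{A}\cdot\abs{S}\log\abs{S}}$, while the residual $\sum_{W,a}\abs{W}$ term is $\bigO{\abs{A}\cdot\abs{S}\log\abs{S}}$ as well.

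Adding the two phases yields the claimed $\bigO{\abs{S}^2 + \abs{A}\cdot\abs{S}\log\abs{S}}$ bound. I expect the delicate point to be the smaller-half invariant in the presence of the ``both halves'' branch of $\Split$, since a naive halving claim fails there and the invariant must instead be phrased in terms of \emph{consecutively processed} sets; a secondary subtlety is justifying that $Q$-membership collapses to a constant-time $P$-query through the idempotent-power reduction.
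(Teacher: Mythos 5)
Your proposal is correct and follows the paper's proof in all essentials: the same correctness argument (the initialization realizes $\cong$, classes are split only when left- or right-stability forces it, and Lemma~\ref{lem:congruence} together with Proposition~\ref{prop:syntactic} finishes), and the same Hopcroft-style smaller-half analysis yielding $\bigO{\abs{S}^2}$ for initialization and $\bigO{\abs{A}\cdot\abs{S}\log\abs{S}}$ for the main loop, with your indegree summation matching the paper's identity $\sum_{i}\sum_{a}(\abs{C_i \cdot a^{-1}} + \abs{a^{-1}\cdot C_i}) = \sum_{s,a}(n_{s\cdot h(a)} + n_{h(a)\cdot s})$. You additionally make explicit, and correctly, two points the paper glosses over: that membership in $Q$ reduces to a constant-time query of $P$ via $(s,t)\in Q \Leftrightarrow (st^k,t^k)\in P$ for the idempotent power $t^k$ of $t$ (using that $P$, being closed under conjugation, agrees with $Q$ on linked pairs), and that the halving invariant should be stated for consecutively \emph{processed} sets, since the literal claim $\abs{C'}\le\abs{C}/2$ for sets \emph{added} to $\queue$ fails under the both-halves replacement branch of $\Split$\,---\,your rephrasing is the standard repair and the charging scheme (only processed sets incur $\Split$ and preimage cost) keeps the stated bound intact.
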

\begin{proof}
  Let us first argue that Algorithm~\ref{alg:minimization} is correct.
  The partition is initialized with the equivalence classes of $\cong$. A class
  is only split when it is necessary to restore the left-stability or
  right-stability. Upon termination, the relation induced by the partition is a
  congruence, as stated in Lemma~\ref{lem:congruence}. Thus, it is the coarsest
  congruence that refines $\cong$ and hence equivalent to $\equiv$.

  For the analysis of the running time, we assume that the operation
  $\Split(X)$ can be implemented in time linear in $\abs{X}$.
  Then the initialization clearly takes $\bigO{\abs{S}^2}$ time.
  We denote by $C_1, \dots, C_k$ the sets that are added to $\queue$ during
  the course of the algorithm. Let $s \in S$ and let $n_s = \set{i}{1 \le i \le
  k, s \in C_i}$ be the number of sets $C_i$ containing $s$.
  At any point in time, there is at most one set in $\queue$ that contains
  $s$. If such a set $C$ is removed from $\queue$ and another set $C'$ with $s
  \in C'$ is added to $\queue$ at a later point in time, we have that
  $\abs{C'} \le \abs{C} / 2$. Thus, the inequality $n_s \le \log{\abs{S}}$
  holds for all $s \in S$ and we have
  \begin{ceqn}
    \begin{equation*}
      \sum_{i=1}^k \sum_{a \in A} \bigl(\abs{C_i \cdot a^{-1}} + \abs{a^{-1} \cdot C_i}\bigr) =
      \sum_{s \in S, a \in A} \bigl(n_{s \cdot h(a)} + n_{h(a) \cdot s}\bigr) \le
      2 \abs{A} \cdot \abs{S} \log{\abs{S}}.
    \end{equation*}
  \end{ceqn}
  Consequently, the total running time of the \emph{while}-loop is in
  $\bigO{\abs{A} \cdot \abs{S} \log{\abs{S}}}$, assuming that $\queue$ is
  implemented efficiently, \eg{}as a linked list.
\end{proof}

If the alphabet $A$ is fixed and the semigroup $S$ becomes large, the running
time is dominated by the initialization. However, one can show that the
algorithm we presented is quite optimal.
Before we start with the proof of the optimality result, we need the following
technical Lemma that asserts the existence of a semigroup with certain
properties:
\begin{lemma}
  For every $n \ge 4$ there exist a semigroup $T$ with $n^2 \cdot 2^n + n$
  elements and a set $D \subseteq E(T)$ such that the following properties
  hold:
  \begin{enumerate}
    \item $T$ has rank $2$, \ie$T$ is $X$-generated for some $X \subseteq T$
      with $\abs{X} = 2$.
    \item $\abs{D} = 2^{n-1}$.
    \item For all $e, f \in D$ and $x, y \in T$, we have $e = f$ or $xy \ne e$ or $yx \ne f$. \label{enum:bbb}
  \end{enumerate}
  \label{lem:existence-semigroup}
\end{lemma}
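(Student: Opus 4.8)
The plan is to reduce property~\ref{enum:bbb} to a statement about Green's $\mathcal{D}$-relation and only then to exhibit a concrete semigroup. First I would observe that property~\ref{enum:bbb} is a non-conjugacy requirement: it forbids two \emph{distinct} idempotents of $D$ from arising as $xy$ and $yx$. I would show that whenever $e = xy$ and $f = yx$ are both idempotent they must be $\mathcal{D}$-related. Setting $x_0 = ex = xf$ and $y_0 = fy = ye$, one checks the identities $x_0 y_0 = e$, $y_0 x_0 = f$, $e x_0 = x_0 = x_0 f$ and $f y_0 = y_0 = y_0 e$; these give $e \Req x_0 \Leq f$ and hence $e \mathrel{\mathcal{D}} f$. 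Consequently it suffices to construct a $2$-generated semigroup $T$ of the prescribed size containing $2^{n-1}$ idempotents in pairwise distinct $\mathcal{D}$-classes, and to let $D$ be exactly this set of idempotents: two distinct members are then in different $\mathcal{D}$-classes, so they cannot be conjugate, and property~\ref{enum:bbb} holds automatically.

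For the construction I would start from the ``rectangular band times semilattice'' core
\[ M = \set{(i,j,S)}{1 \le i,j \le n,\ S \subseteq \os{1,\dots,n}} \]
with multiplication $(i,j,S)\cdot(k,\ell,U) = (i,\ell,S \union U)$. Every element of $M$ is idempotent, $M$ has exactly $n^2 2^n$ elements, and computing the principal right and left ideals shows that $(i,j,S) \Req (i',j',S')$ iff $i=i'$ and $S=S'$ (dually for $\Leq$); hence two elements of $M$ are $\mathcal{D}$-related precisely when they share the same set-component $S$. Thus $M$ has $2^n$ distinct idempotent $\mathcal{D}$-classes, one per subset $S$. I would then take $D = \set{(1,1,S)}{S \subseteq \os{1,\dots,n},\ 1 \in S}$, so that $\abs{D} = 2^{n-1}$ with all members in different $\mathcal{D}$-classes, settling properties~2 and~\ref{enum:bbb}.

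The single property that $M$ on its own fails is rank: generating the $n \times n$ rectangular part and the $n$-dimensional semilattice each needs $n$ generators, so $M$ has rank $n$. The role of the $n$ extra elements in the size formula $n^2 2^n + n$ is exactly to repair this. I would adjoin an $n$-element ``counter'' $c_1, \dots, c_n$ and define two generators $a,b$ so that $a$ advances the counter while $b$, reading off the current counter value, deposits the corresponding tag and mark into the core. Every $(i,j,S)$ is then reachable as a product of $a$'s and $b$'s, the only elements generated outside $M$ are the $n$ counter elements, and $T$ has rank~$2$ and $\abs{T} = n^2 2^n + n$.

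The main obstacle is verifying that this re-encoding does not secretly merge the $\mathcal{D}$-classes used for $D$. A naive cyclic counter acts on the set-component by rotation, so subsets differing by a rotation would become $\mathcal{D}$-related, collapsing the $2^n$ classes into roughly $2^n / n$ rotation orbits---far too few. The delicate point is therefore to arrange the counter so that it pins the \emph{absolute} positions of the marks rather than acting as a symmetry: every generated element must record its set-component relative to a fixed ``home'' of the counter, so that no non-trivial rotation is realised inside $T$. Once this is guaranteed, the set-component remains a $\mathcal{D}$-invariant on the idempotents, the $2^{n-1}$ members of $D$ stay in distinct $\mathcal{D}$-classes, and property~\ref{enum:bbb} follows. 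Checking associativity of the extended multiplication and confirming the exact element count are then routine, if somewhat tedious, bookkeeping.
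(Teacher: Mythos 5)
Your reduction of property~\ref{enum:bbb} to Green's relations is correct and in fact slicker than what the paper does: the identities for $x_0 = ex = xf$ and $y_0 = fy = ye$ do show that idempotents of the form $xy$ and $yx$ are $\mathcal{D}$-related, so it suffices to place the $2^{n-1}$ idempotents of $D$ in pairwise distinct $\mathcal{D}$-classes (the paper instead verifies property~\ref{enum:bbb} by a direct computation in an explicit semigroup). Your analysis of the standalone core $M$, a rectangular band times the semilattice $(2^N, \union)$, is also correct.

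The genuine gap is the decisive third step: the $2$-generated semigroup $T$ is never actually defined, and it cannot be obtained by adjoining a counter to your $M$ \emph{as stated}, because your core multiplication $(i,j,S)\cdot(k,\ell,U) = (i,\ell,S \union U)$ only unions set-components. If the $n$ counter elements leave the set-component untouched, then every product of the two generators has set-component equal to that of the single ``deposit'' generator $b = (i_0,j_0,S_0)$, since $S_0 \union S_0 = S_0$; the generated subsemigroup then contains at most $n^2 + n$ elements and never reaches $M$. If instead the counter rotates the set-component, you observe yourself that the $\mathcal{D}$-classes collapse to rotation orbits, of which there are roughly $2^n/n < 2^{n-1}$ for $n \ge 4$\,---\,too few. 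Your phrase ``arrange the counter so that it pins the absolute positions of the marks'' names precisely the unconstructed content of the lemma. The paper's resolution is to change the core multiplication itself: it sets $(i, X, j) + (k, Y, \ell) = (i, X \union \os{j+k} \union Y, \ell)$, with the counter $N$ acting additively on the outer coordinates via $(i,X,j) + k = (i,X,j+k)$ and $i + (j,X,k) = (i+j,X,k)$. The inserted mark $j+k$ is the mechanism you are missing: it converts accumulated counter offsets into set elements, so that $\os{1, (0,\emptyset,0)}$ generates all of $T$, while sets can only grow under multiplication, which keeps the set-component a $\mathcal{D}$-invariant (and makes property~\ref{enum:bbb} a two-line check, $E = X \union \os{0} \union Y = F$). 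Without exhibiting such a multiplication and checking its associativity, the proposal does not establish the lemma.
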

\begin{proof}
  Let $n \ge 4$ and let $N = \os{0, \dots, n-1}$. Let $T$ be the set $N \times
  2^N \times N \union N$. We denote by $+$ be the addition modulo $n$ which can
  be extended to $T$ as follows:
  \begin{ceqn}
    \begin{align*}
      (i, X, j) + (k, Y, \ell) & = (i, X \union \os{j + k} \union Y, \ell) \\
      (i, X, j) + k & = (i, X, j + k) \\
      i + (j, X, k) & = (i + j, X, k)
    \end{align*}
  \end{ceqn}
  for all $i, j, k, \ell \in N$ and $X, Y \subseteq N$. It is easy to check
  that this operation is associative and thus, $(T, +)$ forms a semigroup.
  The number of elements of $T$ is $n^2 \cdot 2^n + n$.
  One can also easily verify that $T$ is $\os{1, (0, \emptyset, 0)}$-generated.

  Now, consider the set $D$ of all elements of the form $(0, X, 0)$ for
  $0 \in X \subseteq N$. We have $(0, X, 0) \cdot (0, X, 0) = (0, X \union
  \os{0} \union X, 0) = (0, X, 0)$ and thus, $D \subseteq E(T)$.
  The number of elements in $D$ is $2^{n-1}$.
  To show property~\ref{enum:bbb}, we assume that there exist $E, F \subseteq
  N$ and $(i, X, j), (k, Y, \ell) \in T$ such that $(i, X, j) \cdot (k, Y,
  \ell) = (0, E, 0)$ and $(k, Y, \ell) \cdot (i, X, j) = (0, F, 0)$.
  By the definition of the operation $+$ on $T$, this implies $i = j = k = \ell
  = 0$.  Moreover, we have $E = X \union \os{0} \union Y = Y \union \os{0}
  \union X = F$. The other cases ($x \in N$ or $y \in N$) are similar.
\end{proof}

We now use the previous Lemma to construct another semigroup with four
generators and a large number of conjugacy classes.
\begin{lemma}
  Let $A = \os{a, b, \overline a, \overline b}$, let $c \in \mathbb{N}$ and let
  $\lambda \in \mathbb{R}$ be a strictly positive number. Then there exist a
  semigroup $S$ and a surjective morphism $g \colon A^+ \to S$, such that $S$
  has more than $c \cdot \abs{S}^{2-\lambda}$ conjugacy classes.
  \label{lem:existence-semigroup2}
\end{lemma}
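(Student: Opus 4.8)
The plan is to use the $2^{n-1}$ pairwise non-coupled idempotents supplied by Lemma~\ref{lem:existence-semigroup} to manufacture about $\abs{S}^{2}$ pairwise non-conjugate linked pairs. One cannot take $S=T$ or a direct power of $T$: a linked pair $(s,e)$ with $e\in D$ satisfies $se=s$, which in $T$ forces the index of $e$ to be contained in the index of $s$, so there are only about $3^{n-1}$ such linked pairs; and because the conjugacy classes of a direct product are the products of the conjugacy classes of the factors, every power of $T$ keeps the same exponent $\log_{2}3<2$. To decouple the idempotent index from the source index I would instead adjoin to $T$ a family of $\abs{T}$ left zeros.

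Explicitly, let $S=T\union\set{z_{t}}{t\in T}$, where the $z_{t}$ are new elements, and extend the multiplication of $T$ by $s\cdot z_{t}=z_{st}$, $z_{t}\cdot s=z_{t}$ and $z_{t}\cdot z_{t'}=z_{t}$ for $s\in T$ and $t,t'\in T$. Associativity follows from a short case distinction: any product in which a $z$ stands on the left collapses, under either bracketing, to that same $z$, since the $z$'s absorb everything to their right. Thus $\abs{S}=2(n^{2}\cdot2^{n}+n)$. As $T$ is generated by two elements and as $T\cdot z_{1}=\set{z_{t}}{t\in T}$ (the map $s\mapsto s\cdot 1$ is onto $T$, where $1\in N$), the semigroup $S$ is generated by three elements (the two generators of $T$ together with $z_{1}$); sending the four letters of $A$ onto these generators yields the required surjective morphism $g$.

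For the lower bound I would look at the linked pairs $(z_{t},e)$ with $t\in T$ and $e\in D$; each is a linked pair since $z_{t}\cdot e=z_{t}$ and $e\in E(T)\subseteq E(S)$. These are pairwise non-conjugate: if $(z_{t},e)\sim(z_{t'},f)$ with $e,f\in D$ via some $x,y\in S$, then $xy=e$ and $yx=f$ lie in $T$, which forces $x,y\in T$ (a product is a source whenever its left factor is one); property~\ref{enum:bbb} of Lemma~\ref{lem:existence-semigroup} then gives $e=f$, while $z_{t}\cdot x=z_{t}$ forces $z_{t}=z_{t'}$, i.e.\ $t=t'$. Hence $S$ has at least $\abs{T}\cdot\abs{D}=(n^{2}\cdot 2^{n}+n)\cdot 2^{n-1}\ge n^{2}\cdot 2^{2n-1}$ conjugacy classes.

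Finally I would compare with $\abs{S}^{2-\lambda}$. From $\abs{S}\le 4\,n^{2}\cdot 2^{n}$ one gets that the number of conjugacy classes divided by $c\,\abs{S}^{2-\lambda}$ is at least a constant multiple of $2^{\lambda n}/n^{2}$, which tends to infinity, so any sufficiently large $n$ (depending on $c$ and $\lambda$) does the job. The essential difficulty of the whole argument sits in Lemma~\ref{lem:existence-semigroup}, namely producing that many non-coupled idempotents in a semigroup of only polynomially larger size; here the only point requiring care is to verify that adjoining the left zeros neither introduces new couplings among the idempotents of $D$ nor identifies two distinct sources under conjugation, which is precisely what the case analysis on the location of $x$ and $y$ secures.
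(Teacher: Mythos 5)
Your proof is correct and takes essentially the same route as the paper: both arguments adjoin to the semigroup $T$ of Lemma~\ref{lem:existence-semigroup} an absorbing copy of $T$ of equal size, exhibit the $\abs{T} \cdot \abs{D}$ linked pairs whose first component lies in the copy and whose idempotent lies in $D$, show these pairs pairwise non-conjugate by forcing the conjugating elements $x, y$ into $T$ (so that property~\ref{enum:bbb} gives $e = f$) while absorption fixes the first component, and conclude with the same $n^2 2^{2n-1}$ versus $c\,(4n^2 2^n)^{2-\lambda}$ comparison. The only difference is cosmetic: you realize the copy as a left-zero band $\set{z_t}{t \in T}$ with $T$ acting by left multiplication on the indices, which lets you get by with three generators, whereas the paper uses a disjoint copy $\overline T$ carrying its own multiplication and maps all four letters of $A$ onto generators.
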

\begin{proof}
  We first define $B = \os{a, b}$, $\overline B = \os{\overline a, \overline
  b}$ and choose $n \ge 4$ such that $32cn^2 < 2^{\lambda n}$.
  Let $T$ be a finite semigroup and let $D$ be a subset of $E(T)$ with the
  properties described in Lemma~\ref{lem:existence-semigroup}. Let $h \colon
  B^+ \to T$ be a surjective homomorphism.
  We denote by $\overline T$ a disjoint copy of $T$ and by $\overline h$ the
  morphism $\overline h \colon \overline B \to \overline T$ induced by $h$. Now
  we define $S = ({\overline T}^1 \times 1) \union (\overline 1 \times T^1)
  \setminus \os{(\overline 1, 1)}$ with the multiplication
  \begin{ceqn}
    \begin{align*}
      (\overline s, s) \cdot (\overline t, t) =
      \begin{cases}
        (\overline 1, s \cdot t) & \text{if } \overline s = \overline t = \overline 1 \\
        (\overline s \cdot \overline t, 1) & \text{otherwise}
      \end{cases}
    \end{align*}
  \end{ceqn}
  where $1$ denotes the identity in $T^1 \setminus T$ and $\overline 1$ denotes
  the identity in ${\overline T}^1 \setminus \overline T$.
  By construction, the semigroup $S$ has $2 n^2 2^n + 2n + 1 < 4 n^2 2^n$
  elements.
  The morphism $g \colon A^+ \to S$ defined by $g(\overline c) = (\overline
  h(\overline c), 1)$ for $\overline c \in \overline B$ and $g(c) = (\overline
  1, h(c))$ for $c \in B$ is surjective.

  Consider the set $F = (\overline T \times 1) \times (\overline 1 \times D)$.
  We will show that $F$ contains more than $c \cdot \abs{S}^{2-\lambda}$
  elements, that each element of $F$ is a linked pair of $S$ and that no two
  different elements of $F$ are conjugate, thereby proving the claim.

  We start with the cardinality of $F$. We have $\abs{F} > n^2 2^{2n-1} = n^2
  2^{2n-\lambda n+\lambda n-1} > 16cn^4 (2^n)^{2-\lambda} > c (4n^2
  2^n)^{2-\lambda} > c \abs{S}^{2-\lambda}$, where the second inequality follows
  by the choice of $n$.
  Showing that $F$ only consists of linked pairs is easy and is left as an
  exercise to the reader.
  Now consider two pairs $((\overline s, 1), (\overline 1, e))$ and
  $((\overline t, 1), (\overline 1, f))$ from $F$. Suppose these pairs are
  conjugate, \ie there exist $(\overline x, x), (\overline y, y) \in S$ such
  that
  $(\overline s, 1) \cdot (\overline x, x) = (\overline t, 1)$,
  $(\overline x, x) \cdot (\overline y, y) = (\overline 1, e)$ and
  $(\overline y, y) \cdot (\overline x, x) = (\overline 1, f)$.
  From the second equation, we see that $\overline x = \overline y = \overline
  1$. Therefore, $\overline s = \overline t$.  Additionally, we have $x y = e$,
  as well as $y x = f$. Property~\ref{enum:bbb} in
  Lemma~\ref{lem:existence-semigroup} yields $e = f$.
\end{proof}

The optimality result now follows by using the previous construction as an
input to the minimization algorithm.
\begin{proposition}
  The syntactic morphism cannot be computed in time
  $\bigO{\abs{S}^{2-\lambda}}$ for any strictly positive, fixed value $\lambda
  \in \mathbb{R}$.
\label{prop:optimality}
\end{proposition}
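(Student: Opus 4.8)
The plan is to prove a matching lower bound in the model of computation fixed in Section~\ref{sec:prelim}, where the accepting set can only be accessed through constant-time queries of the form ``is $(s,t) \in P$?''. I will show that, already on a single trivial instance, every correct deterministic algorithm is forced to issue at least one query inside each conjugacy class of linked pairs, and hence at least as many queries as there are such classes. Since Lemma~\ref{lem:existence-semigroup2} supplies semigroups with more than $c \cdot \abs{S}^{2-\lambda}$ conjugacy classes, this contradicts any running time of the form $\bigO{\abs{S}^{2-\lambda}}$.

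Concretely, suppose for contradiction that some algorithm computes the syntactic morphism in time at most $C \cdot \abs{S}^{2-\lambda}$ for a fixed constant $C$ and all sufficiently large inputs. I would fix $\lambda$, apply Lemma~\ref{lem:existence-semigroup2} with $c = \lceil C \rceil$, and let $S$, $g \colon A^+ \to S$ be the resulting objects, whose linked pairs fall into $N > C \cdot \abs{S}^{2-\lambda}$ pairwise distinct conjugacy classes $C_1, \dots, C_N$; since $n$ may be chosen arbitrarily large, $\abs{S}$ is unbounded, so this genuinely contradicts an asymptotic bound. I then run the algorithm on the instance $(g, P)$ with $P = \emptyset$, for which $[P] = \emptyset$ and the syntactic morphism is the trivial one-element morphism. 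Let $\mathcal{Q}$ be the set of pairs queried during this run.

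The crux is the adversary argument showing $\abs{\mathcal{Q}} \ge N$. I claim $\mathcal{Q}$ must meet every class $C_i$. Indeed, suppose $\mathcal{Q} \intersect C_i = \emptyset$ for some $i$, and consider the alternative input with accepting set $P = C_i$. This set is closed under conjugation; moreover $[C_i]$ is nonempty (the underlying linked pair has nonempty $\omega$-language because $g$ is onto) and proper (the classes partition $A^\omega$ and $N \ge 2$), so its syntactic semigroup has at least two elements and its syntactic morphism differs from the trivial one. Every pair in $\mathcal{Q}$ lies outside $C_i$ and is therefore answered ``not in $P$'' under both inputs, so the deterministic algorithm performs the identical sequence of queries and returns the identical trivial morphism on both instances\,---\,a contradiction. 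Hence $\mathcal{Q}$ meets all $N$ classes; as the classes are disjoint and each query lies in exactly one of them, this forces $\abs{\mathcal{Q}} \ge N$, so the running time on this instance exceeds $C \cdot \abs{S}^{2-\lambda}$, contradicting the assumed bound.

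The delicate point, and the reason the whole construction is phrased in terms of conjugacy \emph{classes} rather than individual linked pairs, is that $P$ must remain closed under conjugation: one cannot toggle a single pair, only a whole class, and a query landing in $C_j$ must not accidentally resolve the membership of $C_i$. Distinctness of the $C_i$ is precisely what Lemma~\ref{lem:existence-semigroup2} guarantees and is what makes each class independently toggleable. I expect the only substantive checks to be (i) confirming that the deterministic run on $P = \emptyset$ is truly oblivious to the toggled class, and (ii) the elementary fact that $\emptyset$ and a nonempty proper $\omega$-language have syntactic morphisms of different sizes; the quantitative estimate $N > c \cdot \abs{S}^{2-\lambda}$ is already delivered by the lemma.
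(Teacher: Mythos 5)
Your argument is the paper's own proof run in mirror image. The paper executes the hypothetical fast algorithm on the set of all linked pairs of the semigroup from Lemma~\ref{lem:existence-semigroup2} (so that $[P]=A^\omega$ and the correct output is trivial), uses the time bound to find a conjugacy class $C$ never queried, and \emph{deletes} it, obtaining an indistinguishable input $Q=P\setminus C$ with $[Q]\ne A^\omega$; you start from $P=\emptyset$ and \emph{insert} an unqueried class $C_i$. Both versions rest on exactly the same two pillars\,---\,the class count from Lemma~\ref{lem:existence-semigroup2} and the determinism/oracle-indistinguishability step\,---\,so this is not a genuinely different route. Your explicit bookkeeping (choosing $c=\lceil C\rceil$, letting $n$ grow so the asymptotic bound really applies, insisting that only whole conjugacy classes be toggled so the input stays closed under conjugation) is correct and in fact more careful than the paper's rather terse write-up.

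The one spot that needs repair is your parenthetical justification of properness: ``the classes partition $A^\omega$'' is stronger than anything established in the paper. Coverage of $A^\omega$ by the sets $[C]$ follows from Ramsey's theorem, but disjointness\,---\,that $[s][e]^\omega \intersect [t][f]^\omega \ne \emptyset$ forces $(s,e)\sim(t,f)$\,---\,requires a nontrivial merging-of-factorizations argument (it is true, and a variant is implicit in the proof of Proposition~\ref{prop:strong}, but it cannot be cited as obvious). Fortunately you only need $[C_i]\ne A^\omega$, and that follows from Lemma~\ref{lem:lpfac} alone: pick $(t,f)\in C_j$ with $j\ne i$ and words $u\in[t]$, $v\in[f]$, so $uv^\omega\in[C_j]$; if $uv^\omega\in[s][e]^\omega$ for some linked pair $(s,e)$, the lemma yields $v=v_1v_2$ with $v_1 \neq \varepsilon$, $h(uv_1)=s$ and $h(v_2vv_1)=e$, and then $x=h(v_1)$, $y=h^1(v_2v)$ satisfy $tx=s$, $xy=f$, $yx=e$, whence $(s,e)\sim(t,f)$ and so $uv^\omega\notin[C_i]$. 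Two further trifles: ``each query lies in exactly one of them'' should read ``in at most one'' (queries may hit non-linked pairs or pairs outside all $C_i$), which leaves the bound $\abs{\mathcal{Q}}\ge N$ intact; and note that you do not even need the algorithm to be \emph{correct} on the instance $P=\emptyset$\,---\,identical outputs on two instances whose correct answers differ already yield the contradiction. With the properness patch, your proof is complete.
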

\begin{proof}
  Assume there exists an algorithm and a constant $c \ge 1$ such that every
  input of size $n = \abs{S}$ can be minimized in time $T(n) \le c \cdot
  n^{2-\lambda}$. Consider the execution of the algorithm on the semigroup $S$
  described in Lemma~\ref{lem:existence-semigroup2} and on $P = F$. 
  We denote by $(s_1, e_1), (s_2, e_2), \dots, (s_\ell, e_\ell)$ the sequence
  of linked pairs for which the algorithm checks whether $(s_i, e_i) \in P$.
  We have $\ell \le T(n) \le c \cdot \abs{S}^{2-\lambda}$ and thus, there is a
  conjugacy class $C$ such that $(s_i, e_i) \not\in C$ for all $i \in \os{1,
  \dots, \ell}$.
  Since the algorithm is deterministic, the execution sequence on input $Q = P
  \setminus C$ is the same, and the algorithm returns, again, the trivial
  semigroup consisting of one element. However, $[Q] \ne A^\omega$ and thus,
  the algorithm is incorrect.
\end{proof}

\section{Language operations on morphisms}
\label{sec:langop}

One of the merits of strong recognition is that complementation is easy.
If a morphism $h \colon A^+ \to S$ onto a finite semigroup $S$ strongly
recognizes a language $L \subseteq A^\omega$, it also strongly recognizes the
complement $A^\omega \setminus L$.
As in the case of finite words, we can use \emph{direct
products} for unions and intersections.

Another operation on languages which is of particular interest when it comes to
converting MSO formulas to strongly recognizing morphisms are so-called
\emph{length-preserving morphisms}.
Suppose we are given alphabets $A$, $B$ and a length-preserving morphism $\pi
\colon A^+ \to B^+$, \ie$\pi(a) \in B$ for all $a \in A$. We naturally extend
this morphism to infinite words by setting $\pi(a_1 a_2 \cdots) = \pi(a_1)
\pi(a_2) \cdots$ and to languages $L \subseteq A^\omega$ by setting $\pi(L) =
\set{\pi(\alpha)}{\alpha \in L}$.
\begin{proposition}
  Let $\pi \colon A^+ \to B^+$ be a length-preserving morphism, let $S$ be a
  finite semigroup and let $h \colon A^+ \to S$ be a surjective morphism that
  strongly recognizes a language $L \subseteq A^\omega$.
  Then there exist a semigroup $T$ of size $2^{\abs{S}}$ and a morphism $g
  \colon B^+ \to T$ that strongly recognizes $\pi(L)$.
\end{proposition}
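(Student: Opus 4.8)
The plan is to use a powerset construction on $S$ analogous to the subset construction for automata, exploiting the characterisation of strong recognition given in Proposition~\ref{prop:strong}. Since $\pi$ is length-preserving, every infinite word $\beta \in B^\omega$ arises as $\pi(\alpha)$ for finitely many $\alpha \in A^\omega$, and the relevant data is captured by which elements of $S$ can be the $h$-image of a preimage under $\pi$. Concretely, I would let $T$ be a subsemigroup of the powerset semigroup $2^S$ (with product $U \cdot V = \set{uv}{u \in U, v \in V}$), and define $g \colon B^+ \to T$ on letters by $g(b) = \set{h(a)}{a \in A,\ \pi(a) = b}$. This is well-defined, and since $\pi$ is a morphism one checks readily by induction that $g(w) = \set{h(v)}{v \in A^+,\ \pi(v) = w}$ for every $w \in B^+$; I would take $T = g(B^+)$, so $\abs{T} \le 2^{\abs{S}}$.

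First I would record the key combinatorial identity: for $w \in B^+$, the set $g(w)$ is exactly the collection of $h$-images of the $\pi$-preimages of $w$ of the same length as $w$. This makes $g$ a morphism onto $T$ and gives the size bound. Next I would set up the candidate recognition data. For a linked pair $(S', E')$ of $T$ (so $S', E' \subseteq S$ with $E'$ idempotent in $2^S$ and $S'E' = S'$), I want to declare it accepting precisely when \emph{some} choice of a linked pair $(s, e)$ of $S$ with $s \in S'$ and $e \in E'$ satisfies $[s][e]^\omega \subseteq L$. The intuition is that $\beta = \pi(\alpha) \in \pi(L)$ exactly when at least one preimage $\alpha$ lies in $L$, so the acceptance condition at the level of $T$ must be an existential quantification over preimages, matching the union-like behaviour of $\pi$.

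The main verification, and the step I expect to be the main obstacle, is showing that the set $P_T$ of accepting linked pairs of $T$ satisfies $[P_T] = \pi(L)$ \emph{and} is closed under conjugation; equivalently, by Proposition~\ref{prop:strong}, that $[S'][E']^\omega \cap \pi(L) \ne \emptyset$ implies $[S'][E']^\omega \subseteq \pi(L)$ for all $S', E' \in T$. The delicate point is that fixing $g(w_0) = S'$ and $g(w) = E'$ only fixes the \emph{sets} of available $h$-values, not a coherent infinite sequence of preimages; to produce an actual $\alpha \in L$ projecting to a given $\beta \in [S'][E']^\omega$, I would use a Ramsey/pigeonhole argument over the finitely many elements of $S$ appearing in the factor blocks, exactly in the spirit of the factorisation argument in the proof of Proposition~\ref{prop:strong}. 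The point is that once $[S'][E']^\omega$ meets $\pi(L)$, every block admits a preimage realising a fixed linked pair $(s,e)$ with $[s][e]^\omega \subseteq L$, and these local preimages can be stitched together because $s \in S'$ and $e \in E'$ hold uniformly across all factorisations of words in $[S'][E']^\omega$. This uniformity is precisely what upgrades ``one word in the $\omega$-class lies in $\pi(L)$'' to ``the whole $\omega$-class lies in $\pi(L)$'', yielding strong recognition.

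Finally, once the acceptance condition is shown to depend only on the $\Req$/$\Leq$-structure captured by the sets $S'$ and $E'$, closure under conjugation follows formally: conjugate linked pairs of $T$ generate the same infinite languages $[S'][E']^\omega$ (as in the converse direction of Proposition~\ref{prop:strong}), so they are accepted together. I would close by invoking Proposition~\ref{prop:strong} to conclude that $g$ strongly recognizes $\pi(L)$, with the size bound $\abs{T} \le 2^{\abs{S}}$ already established.
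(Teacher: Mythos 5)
Your proposal is correct and takes essentially the same route as the paper: the powerset semigroup with $g(b) = h(\pi^{-1}(b))$, verification of strong recognition via Proposition~\ref{prop:strong}, a Ramsey/regrouping argument producing a linked pair $(t, f)$ of $S$ with $t \in S'$, $f \in E'$ and $[t][f]^\omega \subseteq L$, and stitching together block-wise preimages of $\beta$, which works precisely because $\pi$ is length-preserving. One incidental slip: an infinite word in $B^\omega$ can have uncountably many $\pi$-preimages rather than finitely many, but your argument nowhere relies on that claim.
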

\begin{proof}
  We first define $T$ to be the set $2^S$ of all subsets of $S$ and extend it
  to a semigroup by defining an associative multiplication $X \cdot Y =
  \set{xy}{x \in X, y \in Y}$. The morphism $g \colon B^+ \to T$ is uniquely
  defined by $g(a) = h(\pi^{-1}(a))$ for all $a \in B$.

  Let us now verify that $g$ strongly recognizes $\pi(L)$.
  Consider a linked pair $(s, e)$ and two infinite words $\alpha, \beta \in
  g^{-1}(s)(g^{-1}(e))^\omega$. By Proposition~\ref{prop:strong}, it suffices
  to show that $\alpha \in \pi(L)$ implies $\beta \in \pi(L)$.
  If $\alpha$ is contained in $\pi(L)$, we can conclude by Ramsey's theorem
  that there exists a linked pair $(t, f)$ of $S$ with $t \in s$, $f \in e$ and
  $h^{-1}(t)(h^{-1}(f))^\omega \intersect L \ne \emptyset$. By assumption, $h$
  strongly recognizes $L$ and thus, we have $h^{-1}(t)(h^{-1}(f))^\omega
  \subseteq L$. Since we know that there exists an infinite word $u v_1 v_2
  \cdots \in \pi^{-1}(\beta)$ such that $h(u) = t$ and $h(v_i) = f$ for all $i
  \ge 1$, this immediately yields $u v_1 v_2 \cdots \in L$ and hence $\beta \in
  \pi(L)$.
\end{proof}

\section{Experimental results}

In order to test the algorithms and constructions in practice, we implemented
the conversion of~MSO formulas into strongly recognizing morphisms.
The constructions described in Section~\ref{sec:langop} are used to recursively
convert the formulas, and all intermediate results are minimized using the
algorithm from Section~\ref{sec:syntactic}.
For details on MSO logic over infinite words and its connexion to
regular languages, we refer to~\cite{Thomas90,Thomas97}.
The conversion to strongly recognizing morphisms instead of B\"uchi automata
has the advantage that all intermediate objects can be minimized efficiently. 
Table~\ref{tab:results} shows the size of the computed syntactic semigroup $S$,
the number of linked pairs $F$ and the size of the accepting set $P$ (which is
closed unter conjugation) for the following three families of MSO formulas with
parameter $k \ge 1$ and free second-order variables $X_{k+1} = X_1, X_2, \dots,
X_k$:
\begin{ceqn}
  \begin{align*}
    \varphi_k & \; = \; \forall x \: \bigwedge_{i=1}^k \exists y \: (x < y \land y \in X_i) \\
    \psi_k & \; = \; \forall x \forall y \: (y = x + 1) \rightarrow \bigwedge_{i=1}^k (x \in X_i \rightarrow y \in X_{i+1}) \\
    \chi_k & \; = \; \forall x \bigwedge_{i=1}^k (x \in X_i \rightarrow \exists y \: (x < y \land (y \in X_{i-1} \lor y \in X_{i+1})))
  \end{align*}
\end{ceqn}
All computations were made on a Intel Core i5-3320M with 4GiB of RAM\@. The
execution time was less than three seconds for each formula.
\newcolumntype{R}{>{\raggedleft\arraybackslash}p{1cm}}
\begin{table}[tp!]
  \begin{tabular}{lRRRRRRRRR}
    \toprule
    & \multicolumn{3}{c}{$\varphi_k$} &
      \multicolumn{3}{c}{$\psi_k$} &
      \multicolumn{3}{c}{$\chi_k$} \\
    \cmidrule(r){2-10}
    & $\abs{S}$ & $\abs{F}$ & $\abs{P}$ & $\abs{S}$ & $\abs{F}$ & $\abs{P}$ & $\abs{S}$ & $\abs{F}$ & $\abs{P}$ \\
    \midrule
    $k = 2$ & $4$   & $5$   & $1$ & $12$   & $15$   & $10$   & $7$   & $14$  & $11$   \\
    $k = 3$ & $8$   & $22$  & $1$ & $43$   & $50$   & $41$   & $11$  & $26$  & $15$   \\
    $k = 4$ & $16$  & $74$  & $1$ & $148$  & $163$  & $146$  & $17$  & $61$  & $30$   \\
    $k = 5$ & $32$  & $232$ & $1$ & $539$  & $570$  & $537$  & $41$  & $227$ & $85$   \\
    $k = 6$ & $64$  & $710$ & $1$ & $1863$ & $1926$ & $1861$ & $105$ & $716$ & $184$  \\
    \bottomrule
  \end{tabular}
  \caption{Experimental results for different parameter values}
  \label{tab:results}
\end{table}

\section{Summary and Outlook}

We described several algorithms for weakly recognizing morphisms and
strongly recognizing morphisms over infinite words.
Our tests indicate that strongly recognizing morphisms, when combined with
the minimization algorithm presented in Section~\ref{sec:syntactic}, are a
practical alternative to automata-based models when it comes to deciding
properties of MSO formulas.

Some of the algorithms leave room for optimization. In particular, it would be
interesting to see whether there is a linear-time algorithm to compute
conjugacy classes and whether the running time of the algorithm described in
Section~\ref{sec:strong} can be improved to $\bigO{\abs{A} \cdot \abs{S^2}}$.

{%

}

\end{document}